\definecolor{codegreen}{rgb}{0,0.6,0}
\definecolor{codegray}{rgb}{0.5,0.5,0.5}
\definecolor{codepurple}{rgb}{0.58,0,0.82}
\definecolor{backcolour}{rgb}{0.95,0.95,0.92}
\lstdefinestyle{mystyle}{
    backgroundcolor=\color{backcolour},
    commentstyle=\color{codegreen},
    keywordstyle=\color{magenta},
    numberstyle=\tiny\color{codegray},
    stringstyle=\color{codepurple},
    basicstyle=\footnotesize,
    breakatwhitespace=false,
    breaklines=true,
    captionpos=b,
    keepspaces=true,
    numbers=left,
    numbersep=5pt,
    showspaces=false,
    showstringspaces=false,
    showtabs=true,
    tabsize=2}
\newtheorem{rem}{\emph{Remark}}
\newtheorem{prop}{\emph{Proposition}}
\title{\textbf{\Large Theoretical results and modeling under the discrete Birnbaum-Saunders distribution}}
 \author{\bf \normalsize
  {{Filidor Vilca$^{1}$}}, Roberto Vila$^{2}$, Helton Saulo$^{2}$, Luis Sánchez$^{3}$ and Jeremias Le\~ao$^{4}$\\
  {\footnotesize Department of Statistics, Universidade Estadual de Campinas, Campinas, Brazil$^{1}$}
  \\
    {\footnotesize Department of Statistics, Universidade de Bras\'ilia, Bras\'ilia Brazil$^{2}$}\\
    {\footnotesize Institute of Statistics, Universidad Austral de Chile, Valdivia, Chile$^{3}$}\\
    {\footnotesize Department of Statistics, Universidade Federal do Amazonas, Manaus, Brazil$^{4}$}
  }
\date{\today}
\begin{document}

\maketitle

\begin{abstract}
In this paper, we discuss some theoretical results and properties of a discrete version of the Birnbaum-Saunders distribution. We present a proof of the unimodality of this model. Moreover, results on moments, quantile function, reliability and order statistics are also presented. In addition, we propose a regression model based on the discrete Birnbaum-Saunders distribution. The model parameters are estimated by the maximum likelihood method and a Monte Carlo study is performed to evaluate the performance of the estimators. Finally, we illustrate the proposed methodology with the use of real data sets.
\paragraph{Keywords}
Birnbaum-Saunders distribution; Regression model; Maximum likelihood; Monte Carlo simulation; R software.
\end{abstract}

\section{Introduction}

Despite the increasing number of works on discrete distributions in reliability, one can note that in many practical cases there is the need of more flexible distributions to model lifetime data. One way to develop new discrete distributions is by generating the discrete analogous of usual distributions for continuous lifetimes; see \cite{ALZAATREH2012589}. Some interesting discrete distributions are for example the discrete gamma distribution \citep{abouammohalhazzani:2015} and the discrete Weibull distribution \citep{vns:19}. It is  well known that in reliability, the continuous Birnbaum-Saunders (BS) distribution, proposed by \cite{bs:69a}, takes advantage than most continuous probability distributions, including the continuous gamma and Weibull distributions; see \cite{l:15}. The continuous BS distribution is a positively skewed model that is closely related to the normal distribution. Despite its origin in material fatigue, it has been considered in business, industry, insurance, inventory, quality control, among others; see, for example,
\cite{lp:08},
\cite{blsc:09},
\cite{acfls:10}, 
\cite{vslc:10},
\cite{plbl:12}, 
\cite{mbls:13}, 
\cite{rlwm:15}, 
\cite{wl:15},
\cite{lscc:11,lscb:14,lslm:14,lrsv:17},
\cite{sauloetal:2019},
\cite{dsls:16},
\cite{leaolst:18}, and
\cite{vslm:19}. Good recent references on the Birnbaum-Saunders distribution are \cite{l:15} and \cite{bk:19}. In particular, a positive random variable $T$ is said to follow a continuous BS distribution if its cumulative distribution function (CDF) is given by
\begin{eqnarray} \label{cdf}
F_T(t; \boldsymbol{\theta})= \Phi\left[ a(t;\boldsymbol{\theta})\right], \,  t>0,
\end{eqnarray}
where $\boldsymbol{\theta}=(\alpha, \beta)^{\top}$, with $\alpha>0$ and $\beta>0$ denoting the shape and scale parameters, respectively, $a(t;\boldsymbol{\theta})=\big(\sqrt{{t}/{\beta}} - \sqrt{{\beta}/{t}}\big)/\alpha$, and $\Phi[\cdot]$ is the standard normal CDF. This distribution   is usually denoted by $T\sim {\rm BS}(\boldsymbol{\theta})$. Even though the number of applications of the usual continuous BS distribution has been growing, there is a big number of applications where a discrete version of this distribution could be more appropriate. For example, to model the number of cycles or runs that a material or equipment supports before failing or breaking, the number of sessions of a treatment until the cure of a patient, or even the shelf life (in days) of a food product; see \cite{vns:19}.

In this paper, we study in more depth a discrete version of the continuous BS distribution, which was initially introduced by \cite{article-Subhradev}. The primary objectives of this paper are: (i) to discuss novel theoretical results and properties of this discrete BS (${\rm BS_d}$) distribution; and (ii) to introduce the corresponding regression model. The secondary objectives are: (i) to obtain the maximum likelihood estimates of the model parameters; (ii) to carry out Monte Carlo simulations to evaluate the performance of
the maximum likelihood estimators; and (iv) to discuss real data applications of the proposed methodology.

The rest of the paper proceeds as follows. In Section~\ref{sec:02}, we present the ${\rm BS_d}$ model and discuss some of its mathematical properties. Also, it is considered estimation of the model parameters based on  maximum likelihood method. In Section~\ref{sec:03}, a ${\rm BS_d}$ regression model is proposed, and the model parameter estimation is approached by using the maximum likelihood method. In Section ~\ref{sec:04}, we carry out Monte Carlo simulation studies to evaluate the performance of the estimators and we illustrate the proposed methodology with two real data sets. Finally, in Section~\ref{sec:05}, we make some concluding remarks.

\section{Discrete Birnbaum-Saunders distribution}\label{sec:02}
Before   defining   the  proposed     discrete      distribution,  we  present the probability density function (PDF) and quantile function of the continuous BS distribution. If $T\sim {\rm BS}(\boldsymbol{\theta})$, then its PDF is given by
\begin{align*}
f_T(t;\boldsymbol{\theta})= \phi\big[a(t;\boldsymbol{\theta})\big]a'(t;\boldsymbol{\theta}),  
\, t > 0,
\end{align*}
where $\phi[\,\cdot\,]$ is the PDF of  the standard normal distribution, $a(t;\boldsymbol{\theta})$ is  as  in (\ref{cdf})
and  $
a'(t;\boldsymbol{\theta}) =
(t+\beta)/(2t^{3/2}\alpha\beta^{1/2})$
is the derivative of $a(t;\boldsymbol{\theta})$ with respect to
$t$.  Moreover, the  $p$-th quantile of $T\sim {\rm BS}(\boldsymbol{\theta})$  is  given     by
\begin{equation} \label{quantile}
Q_p =
{\beta\over 4}\,
\left\{\alpha\Phi^{-1}(p)+\sqrt{[\alpha \Phi^{-1}(p)]^2+4}\right\}^2,
\end{equation}
where $p\in (0,1)$.

Now,  we  are ready   to present  a   discrete   random  variable   associated   to  the   positive  $T$   as  follows 
$
S=\lfloor T\rfloor,
$
where $\lfloor t\rfloor$ denotes  the largest integer contained in $t$.   
As the  set  of   all  possible   values  of $T$  is the set $(0,\infty)$, then  
$S=s$ iff $s< T \leq s+1, \, s=0,1,\ldots.$
Consequently, the probability mass function (PMF)  of  $S$  can  be     expressed   by 
\begin{align}\label{relation}
\mathbb{P}(S=s)= \mathbb{P}(s<T\leq s+1)
=
\begin{cases}
\Phi\big[a(1;\boldsymbol{\theta})\big], & \text{if} \ s=0,
\\[0,2cm]
\Phi\big[a(s+1;\boldsymbol{\theta})\big]-\Phi\big[a(s;\boldsymbol{\theta})\big], 
& \text{if} \ s=1, 2, \ldots,
\end{cases}
\end{align}
where $a(\cdot;\bm \theta)$ is as in \eqref{cdf}. We  can     show   that  $\sum_{s=0}^\infty \mathbb{P}(S=s)=\mathbb{P}(T>0)=1$, so  $\mathbb{P}(S=s)$ is  a  PMF.   On  the  other  hand, the CDF of $S$ is  given  by
\begin{align*}
{F}(s;\boldsymbol{\theta})
=
\mathbb{P}(S\leq s)
=
\Phi\big[a(\lfloor s\rfloor +1;\boldsymbol{\theta})\big]
\mathds{1}_{\{s\geq 0\}}.
\end{align*}
The     distribution   of  the   discrete   random  variable  $S$  will be   denoted  by $S\sim {\rm BS_d}(\boldsymbol{\theta})$ and  will be   called 
${\rm BS_d}$ distribution.

The reliability function (RF) and hazard rate function (HR) of $S\sim {\rm BS_d}(\boldsymbol{\theta})$ are, respectively, given by
\begin{align}\label{id-R}
R(s;\boldsymbol{\theta})
=
1-F(s;\boldsymbol{\theta})
=
1-\Phi\big[a({\lfloor s\rfloor}+1;\boldsymbol{\theta})\big] \mathds{1}_{\{s\geq 0\}},
\end{align}
\begin{align*} 
{H}(s;\boldsymbol{\theta})
=
{\mathbb{P}(S=s)\over \mathbb{P}(S=s)+R(s;\boldsymbol{\theta})}
&=
\begin{cases}
\Phi\big[a({1};\boldsymbol{\theta})\big], & \text{if} \ s=0,
\\[0,2cm]
{
	\Phi[a({s+1};\boldsymbol{\theta})]-\Phi[a({s};\boldsymbol{\theta})]
	\over 
	1-\Phi[a({s};\boldsymbol{\theta})]
}
& \text{if} \ s=1, 2, \ldots,
\end{cases}
\\[0,2cm]
&=
\begin{cases}
1-R(0;\boldsymbol{\theta}), & \text{if} \ s=0,
\\[0,2cm]
1-{R(s;\boldsymbol{\theta})\over R(s-1;\boldsymbol{\theta})},
& \text{if} \ s=1, 2, \ldots.
\end{cases}
\end{align*}
From the above identity, we have
\begin{align*}
R(s;\boldsymbol{\theta})
=
\prod_{y=0}^{s-1}
{R(y;\boldsymbol{\theta})\over R(y-1;\boldsymbol{\theta})}
=
\prod_{y=0}^{s-1}
\big[1-H(y;\boldsymbol{\theta})\big],
\quad s=0, 1, 2, \ldots,
\end{align*}
with the convention that $\prod_{y=0}^{-1} b_y=1$ and that $R(-1;\boldsymbol{\theta})=1$.

Figures \ref{fig:pmfs} and \ref{fig:hrs} displays different shapes of the ${\rm BS_d}$ PMF and HR for different choices of parameters. From these figures, we observe that the ${\rm BS_d}$ distribution possesses unimodal shapes for the PMF and HR.

\begin{figure}[h!]
\vspace{-0.25cm}
\centering
{\includegraphics[height=6.2cm,width=6.2cm]{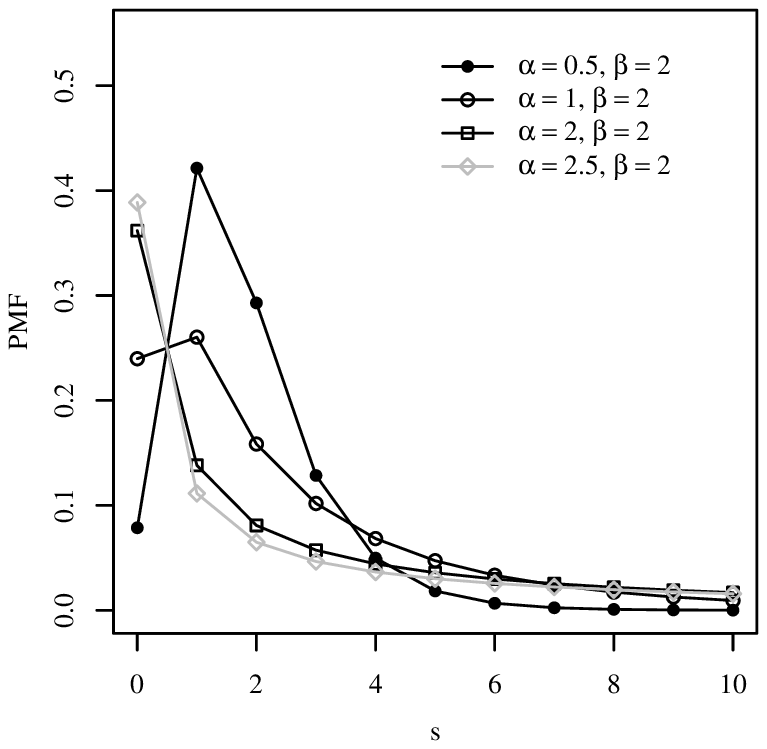}}
{\includegraphics[height=6.2cm,width=6.2cm]{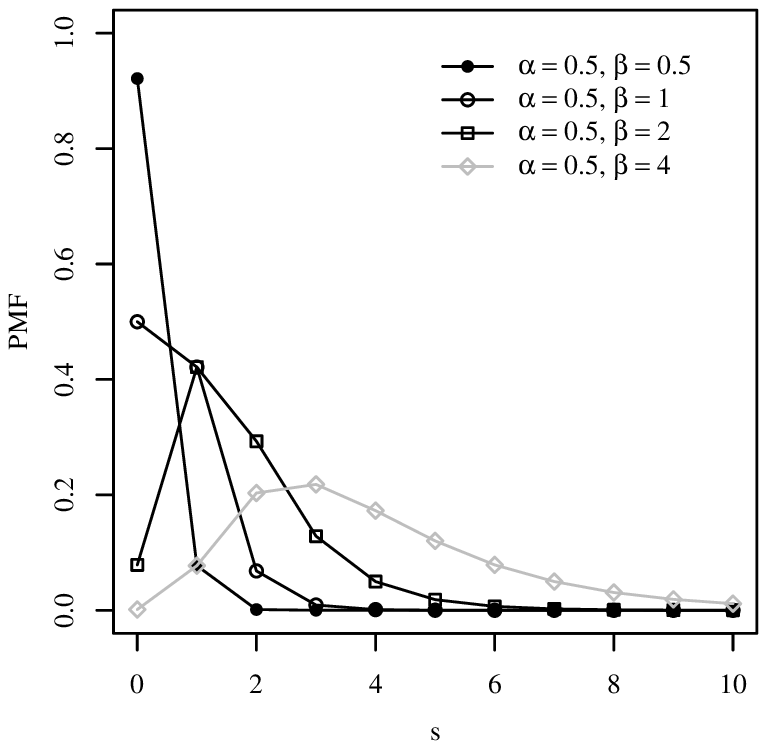}}
\vspace{-0.2cm}
\caption{Plots  of  the PMFs for several parameter values.}\label{fig:pmfs}
\end{figure}

\begin{figure}[h!]
\vspace{-0.25cm}
\centering
{\includegraphics[height=6.2cm,width=6.2cm]{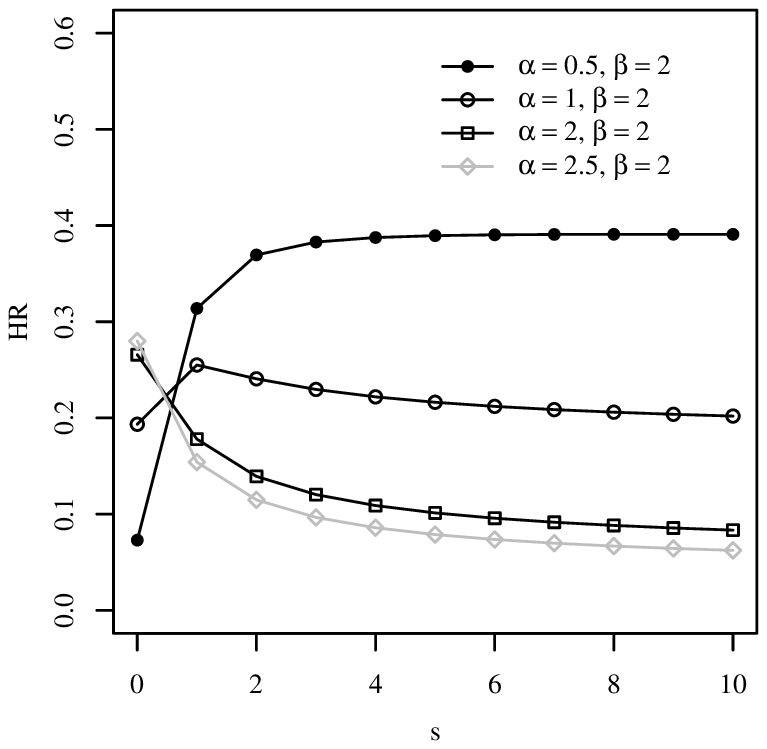}}
{\includegraphics[height=6.2cm,width=6.2cm]{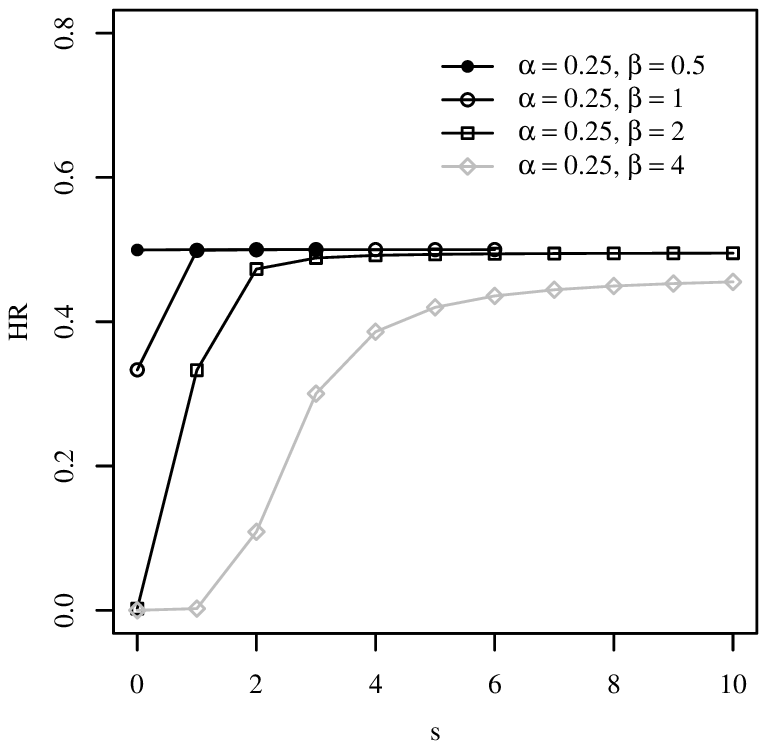}}
\vspace{-0.2cm}
\caption{$\text{BS}_{\text{d}}$ HRs for some parameter values.}\label{fig:hrs}
\end{figure}

\subsection{Properties}
We present some  properties  of  the   ${\rm BS_d}$   distribution,  many of the results can be easily derived from the definition  of the   ${\rm BS_d}$ distribution.
\begin{prop} \label{prop;1}
	If $S\sim {\rm BS_d}(\boldsymbol{\theta})$ and $t>0$, the following holds:
\begin{itemize}
\item[\rm (a)] 
$\sum_{s=0}^\infty \Phi\big[a({s+1};\boldsymbol{\theta})\big]
=
1+ \sum_{s=0}^\infty \Phi\big[a({s};\boldsymbol{\theta})\big];$
\item[\rm (b)] 
$\mathbb{P}(S\leq t)
= 
\mathbb{P}(S\leq \lfloor t\rfloor)
= 
\Phi\big[a({\lfloor t\rfloor+1};\boldsymbol{\theta})\big]
=
\mathbb{P}(T\leq \lfloor t\rfloor +1);$
\item[\rm (c)] 
$\mathbb{P}(S<t)= \mathbb{P}(T\leq \lfloor t\rfloor);$
\item[\rm (d)] 
$\mathbb{P}(S\geq t)= \mathbb{P}(T\geq \lfloor t\rfloor+1);$
\item[\rm (e)] 
$\mathbb{P}(S\leq \lfloor t\rfloor)= \mathbb{P}(T\leq\lfloor t\rfloor +1);$
\item[\rm (f)] 
$\mathbb{P}(S\geq \lfloor t\rfloor)= \mathbb{P}(T> \lfloor t\rfloor)$.
\end{itemize}
\end{prop}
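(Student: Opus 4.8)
The plan is to treat every statement as a translation of an event about the discrete $S$ into an event about the underlying continuous $T$, since $S=\lfloor T\rfloor$. Two elementary equivalences carry the whole argument: for every integer $k$ one has $\lfloor T\rfloor\le k \iff T<k+1$ and $\lfloor T\rfloor\ge k \iff T\ge k$, while the continuity of $T$ gives $\mathbb{P}(T\le x)=\mathbb{P}(T<x)$ and $\mathbb{P}(T=x)=0$ for every $x$. Combined with the continuous CDF $F_T(t;\boldsymbol{\theta})=\Phi[a(t;\boldsymbol{\theta})]$ from \eqref{cdf} and the discrete CDF $F(s;\boldsymbol{\theta})=\Phi[a(\lfloor s\rfloor+1;\boldsymbol{\theta})]$, these reduce each identity to a statement about $\Phi[a(\cdot;\boldsymbol{\theta})]$ evaluated at integers.

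For part (a) I would start from the normalization $\sum_{s=0}^\infty\mathbb{P}(S=s)=1$ established after \eqref{relation}. The key observation is that the two cases of \eqref{relation} can be written as the single formula $\mathbb{P}(S=s)=\Phi[a(s+1;\boldsymbol{\theta})]-\Phi[a(s;\boldsymbol{\theta})]$ valid for all $s\ge 0$, because $a(t;\boldsymbol{\theta})\to-\infty$ as $t\downarrow 0$ forces $\Phi[a(0;\boldsymbol{\theta})]=0$, so the $s=0$ term $\Phi[a(1;\boldsymbol{\theta})]$ is recovered. Substituting this formula into the normalization and separating the two pieces yields exactly $\sum_{s=0}^\infty\Phi[a(s+1;\boldsymbol{\theta})]=1+\sum_{s=0}^\infty\Phi[a(s;\boldsymbol{\theta})]$; since the individual series diverge it is cleanest to perform the separation at the level of partial sums, where the telescoping gives $\sum_{s=0}^N\Phi[a(s+1;\boldsymbol{\theta})]-\sum_{s=0}^N\Phi[a(s;\boldsymbol{\theta})]=\Phi[a(N+1;\boldsymbol{\theta})]$, and then to pass to the limit using $\Phi[a(N+1;\boldsymbol{\theta})]\to 1$.

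Parts (b)--(f) then follow mechanically. For (b), integer-valuedness of $S$ gives $\{S\le t\}=\{S\le\lfloor t\rfloor\}$, the discrete CDF gives $\mathbb{P}(S\le\lfloor t\rfloor)=\Phi[a(\lfloor t\rfloor+1;\boldsymbol{\theta})]$, and \eqref{cdf} identifies this with $F_T(\lfloor t\rfloor+1;\boldsymbol{\theta})=\mathbb{P}(T\le\lfloor t\rfloor+1)$; part (e) is just the middle link of this chain. For (f), $\lfloor T\rfloor\ge\lfloor t\rfloor \iff T\ge\lfloor t\rfloor$ together with continuity gives $\mathbb{P}(S\ge\lfloor t\rfloor)=\mathbb{P}(T\ge\lfloor t\rfloor)=\mathbb{P}(T>\lfloor t\rfloor)$. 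Parts (c) and (d) I would obtain by complementation, writing $\mathbb{P}(S\ge t)=1-\mathbb{P}(S<t)$ and converting each event into its $T$-counterpart with the floor equivalences above.

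The step I expect to require the most care is the strict-versus-nonstrict inequality bookkeeping in (c) and (d): one must track whether the argument $t$ is an integer, since this changes which integer the floor selects and hence which of $\mathbb{P}(T\le\lfloor t\rfloor)$ or $\mathbb{P}(T\le\lfloor t\rfloor+1)$ appears in the reduction. The continuity of $T$, which collapses $\le$ into $<$ and removes point masses, is precisely the tool that reconciles the discrete strict inequalities with the continuous ones, and I would verify the integer and non-integer cases separately to make sure the offsets in the floor terms come out consistently.
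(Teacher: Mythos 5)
Your handling of items (a), (b), (e) and (f) is correct, and it is essentially the only argument available: the paper states Proposition \ref{prop;1} with no proof at all (it only remarks that the results ``can be easily derived from the definition''), so your two floor equivalences $\lfloor T\rfloor\le k\iff T<k+1$ and $\lfloor T\rfloor\ge k\iff T\ge k$ for integer $k$, together with the continuity of $T$, are exactly the intended machinery. Your telescoping treatment of (a) at the level of partial sums, using $\Phi\big[a(t;\boldsymbol{\theta})\big]\to0$ as $t\downarrow0$ and $\Phi\big[a(N+1;\boldsymbol{\theta})\big]\to1$, is in fact more careful than anything in the paper, and is the right way to read an identity whose two sides are individually divergent series.

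The genuine gap is in (c) and (d), at precisely the step you deferred. Run your own complementation to the end: item (d) would give $\mathbb{P}(S<t)=1-\mathbb{P}(T\ge\lfloor t\rfloor+1)=\mathbb{P}(T\le\lfloor t\rfloor+1)$, while item (c) asserts $\mathbb{P}(S<t)=\mathbb{P}(T\le\lfloor t\rfloor)$; both can hold simultaneously only if $\mathbb{P}(\lfloor t\rfloor<T\le\lfloor t\rfloor+1)=0$, which is false because $T$ has a positive density on $(0,\infty)$. So the integer/non-integer case check you promised cannot ``come out consistently'': (c) holds exactly when $t$ is a positive integer (for $t=1/2$ one has $\mathbb{P}(S<1/2)=\mathbb{P}(S=0)=\Phi\big[a(1;\boldsymbol{\theta})\big]>0$ while $\mathbb{P}(T\le0)=0$), and (d) holds exactly when $t$ is not an integer (for $t=1$ one has $\mathbb{P}(S\ge1)=\mathbb{P}(T\ge1)\ne\mathbb{P}(T\ge2)$). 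No bookkeeping rescues the items for all $t>0$; a correct statement must either restrict $t$ to the appropriate case or replace the floors by ceilings, namely $\mathbb{P}(S<t)=\mathbb{P}(T\le\lceil t\rceil)$ and $\mathbb{P}(S\ge t)=\mathbb{P}(T\ge\lceil t\rceil)$, which reduce to (c) for integer $t$ and to (d) for non-integer $t$. (Symptomatically, when the paper later invokes these items in the proof of its quantile proposition, the labels (c) and (d) are applied to the ceiling-corrected identities rather than to the items as printed.) By declaring that (c) and (d) ``follow mechanically'', your proposal certifies two identities that are each false on a complementary half of the domain; the strict-versus-nonstrict split you flagged as routine verification is exactly where the proposition as stated breaks down.
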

%

\subsubsection{$\pmb{p}$-th quantile}
\begin{prop}\label{quantile}
Let $S\sim {\rm BS_d}(\boldsymbol{\theta})$ and  $ Q_p$   the   quantile  function    in (\ref{quantile}),  $p\in (0, 1)$.  
Then,
\begin{itemize}
\item[\rm (a)] If  $Q_p>0$ is a natural number, then  $Q_p-1$  is the  $p$-th quantile of the distribution of $S$;
\item[\rm (b)]
If  $Q_p>0$ is not a natural number, then $p$-th quantile of the distribution of $S$ can be represented by any value in the interval $\big[\lfloor Q_p\rfloor, \lfloor Q_p\rfloor+1\big)$.
\end{itemize}
\end{prop}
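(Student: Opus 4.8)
The plan is to read off both parts from the generalized-inverse characterization of the quantile, $x_p=\inf\{x:F(x;\boldsymbol\theta)\ge p\}$, after reducing the event $\{F(x;\boldsymbol\theta)\ge p\}$ to a single floor inequality. First I would record the monotonicity that drives everything: since $a'(t;\boldsymbol\theta)=(t+\beta)/(2t^{3/2}\alpha\beta^{1/2})>0$ for $t,\alpha,\beta>0$, the map $t\mapsto a(t;\boldsymbol\theta)$ is strictly increasing, hence so are $t\mapsto\Phi[a(t;\boldsymbol\theta)]$ and the continuous CDF $F_T$. Consequently $Q_p$ is the unique solution of $\Phi[a(Q_p;\boldsymbol\theta)]=p$, equivalently $a(Q_p;\boldsymbol\theta)=\Phi^{-1}(p)$, which is exactly the closed form displayed for $Q_p$. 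Combining the discrete CDF $F(x;\boldsymbol\theta)=\Phi[a(\lfloor x\rfloor+1;\boldsymbol\theta)]\mathds{1}_{\{x\ge0\}}$ with strict monotonicity and the existence of $\Phi^{-1}$ yields, for $x\ge0$, the chain $F(x;\boldsymbol\theta)\ge p\iff a(\lfloor x\rfloor+1;\boldsymbol\theta)\ge\Phi^{-1}(p)\iff\lfloor x\rfloor+1\ge Q_p\iff\lfloor x\rfloor\ge Q_p-1$. This single equivalence is the engine for both parts.

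For part (a), suppose $Q_p=n\in\mathbb{N}$. Evaluating the discrete CDF at $x=n-1$ gives $F(n-1;\boldsymbol\theta)=\Phi[a(n;\boldsymbol\theta)]=\Phi[a(Q_p;\boldsymbol\theta)]=p$, so $F(n-1;\boldsymbol\theta)\ge p$; while for any $x<n-1$ one has $\lfloor x\rfloor\le n-2<Q_p-1$, whence $F(x;\boldsymbol\theta)<p$ by the equivalence above. Thus $n-1=Q_p-1$ is the smallest point at which $F$ reaches the level $p$, i.e. the $p$-th quantile of $S$, proving (a).

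For part (b), suppose $Q_p>0$ is not a natural number and write $k=\lfloor Q_p\rfloor$, so that $Q_p\in(k,k+1)$ and therefore $Q_p-1\in(k-1,k)$. The smallest integer satisfying $\lfloor x\rfloor\ge Q_p-1$ is then $k$, so the equivalence gives $\{x\ge0:F(x;\boldsymbol\theta)\ge p\}=[k,\infty)$ with infimum $k=\lfloor Q_p\rfloor$. Moreover the step CDF is constant on the plateau $[k,k+1)$, where $F(x;\boldsymbol\theta)=\Phi[a(k+1;\boldsymbol\theta)]>\Phi[a(Q_p;\boldsymbol\theta)]=p$ (strict because $k+1>Q_p$), while $F(x;\boldsymbol\theta)<p$ for every $x<k$. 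Since $S$ takes only integer values and every real number in $[\lfloor Q_p\rfloor,\lfloor Q_p\rfloor+1)$ shares the floor $\lfloor Q_p\rfloor$ and the common CDF value $\Phi[a(\lfloor Q_p\rfloor+1;\boldsymbol\theta)]\ge p$, each such point represents the same $p$-th quantile of $S$, which establishes (b).

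The algebraic equivalences are routine; the one place to be careful is the boundary bookkeeping that separates the two cases. The non-natural case hinges on the strict inequalities $k<Q_p<k+1$, which force the crossing of the level $p$ to occur exactly at the jump point $\lfloor Q_p\rfloor$ and make the CDF flat—hence the quantile non-unique as a real number—throughout $[\lfloor Q_p\rfloor,\lfloor Q_p\rfloor+1)$; whereas the natural case is precisely the degenerate situation in which $F$ attains the value $p$ exactly at $Q_p-1$, singling out that left endpoint. Keeping the strict-versus-nonstrict inequalities straight across the floor function is the only genuine obstacle.
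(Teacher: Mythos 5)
Your proof is correct, but it follows a genuinely different route from the paper's. The paper never solves for the crossing point explicitly: it invokes the floor-interlacing identities of its Proposition 1 (items (c)--(e)) relating events for $S=\lfloor T\rfloor$ to events for the continuous $T$, and then verifies the two-sided quantile inequalities $\mathbb{P}(S<t)\leq p\leq\mathbb{P}(S\leq t)$ --- at $t=Q_p-1$ in part (a) via $\mathbb{P}(S\leq Q_p-1)=\mathbb{P}(T\leq Q_p)=p$, and at points of $\big[\lfloor Q_p\rfloor,\lfloor Q_p\rfloor+1\big)$ in part (b). You instead solve the inequality $F(x;\boldsymbol{\theta})\geq p$ outright, using the strict monotonicity of $t\mapsto a(t;\boldsymbol{\theta})$ and the relation $a(Q_p;\boldsymbol{\theta})=\Phi^{-1}(p)$ to obtain $F(x;\boldsymbol{\theta})\geq p\Longleftrightarrow\lfloor x\rfloor\geq Q_p-1$, and then read off the quantile as the generalized inverse $\inf\{x:F(x;\boldsymbol{\theta})\geq p\}$. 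Each approach has its merits: the paper's is faithful to the weak (possibly interval-valued) notion of discrete quantile implicit in the statement and recycles already-established identities, while yours is self-contained, pins down the crossing point exactly ($Q_p-1$ in (a), $\lfloor Q_p\rfloor$ in (b)), and handles the integer/non-integer bookkeeping more robustly. Indeed, your treatment of part (b) is arguably safer than the paper's: for non-integer $t\in\big(\lfloor Q_p\rfloor,\lfloor Q_p\rfloor+1\big)$ one has $\mathbb{P}(S<t)=\mathbb{P}\big(S\leq\lfloor Q_p\rfloor\big)=\Phi\big[a(\lfloor Q_p\rfloor+1;\boldsymbol{\theta})\big]>p$, so the two-sided inequalities asserted in the paper for every such $t$ literally hold only at $t=\lfloor Q_p\rfloor$ (the paper's identity (c), applied there at the non-integer point $Q_p$, is valid only for integer arguments); your plateau argument --- the step CDF is constant on $\big[\lfloor Q_p\rfloor,\lfloor Q_p\rfloor+1\big)$, so every point of that interval shares the floor $\lfloor Q_p\rfloor$ and hence ``represents'' the same quantile --- is the cleaner way to substantiate the interval claim. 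The only thing you should make explicit is the quantile convention in force: under the pure infimum definition the $p$-th quantile in (b) is the single point $\lfloor Q_p\rfloor$, and the interval statement is, exactly as you indicate, a representation convention rather than a statement that each such real number satisfies the defining inequalities.
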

\begin{proof}
Since $Q_p$ is the $p$-th quantile for the continuous  random variable  $T\sim {\rm BS}(\boldsymbol{\theta})$,
from   Proposition \ref{prop;1}, Item (e),  we have
\[
\mathbb{P}(S< Q_p-1)
\leq
\mathbb{P}(S\leq Q_p-1)\stackrel{\rm (e)}{=} \mathbb{P}(T\leq Q_p)= p,
\quad S\sim {\rm BS_d}(\boldsymbol{\theta}),
\]
whenever $Q_p>0$ is a natural number.
So,  we  have that $Q_p-1$  is  the   $p$-th quantile of the distribution of $S$. This proves the first item.

Now, let  $t=Q_p>0$ be not a natural number.  From  Items (d) and (c) of Proposition \ref{prop;1} and from inequalities $\lfloor Q_p\rfloor\leq Q_p\leq \lfloor Q_p\rfloor+1$,
we have the following 
\begin{align*}
&\mathbb{P}(S< Q_p)\stackrel{\rm (d)}{=} \mathbb{P}(T\leq \lfloor Q_p\rfloor)
\leq \mathbb{P}(T\leq Q_p)=p,
\\
&\mathbb{P}(S\leq Q_p)\stackrel{\rm (c)}{=} \mathbb{P}(T\leq \lfloor Q_p\rfloor +1)\geq \mathbb{P}(T\leq Q_p) =p,
\end{align*}
and  consequently   
$\mathbb{P}(S< Q_p)\leq p \leq \mathbb{P}(S\leq Q_p)$. This will be true for any $\lfloor Q_p\rfloor \leq t< \lfloor Q_p\rfloor +1$. So, we have that,
at the percentage point $p$, the quantile for $S$ can be represented by any value in
$\big[\lfloor Q_p\rfloor, \lfloor Q_p\rfloor+1\big)$.
Thus we complete the proof.
\end{proof}
\begin{rem}
If
$p=0.5,$ then $Q_p=\beta$.
If $\beta$ is a natural number,
by Proposition \ref{quantile}-(a),
$m=\beta-1$ is   the   median  of  the   distribution  of  $S$.
Already, if $\beta$ is not a natural number, by Proposition \ref{quantile}-(b),
each $y\in\big[\lfloor\beta\rfloor, \lfloor\beta\rfloor+1\big)$ represents a median for $S$. 
\end{rem}

\subsubsection{Shape properties}
The next two  results   are  related   to   the unimodality  of  the  ${\rm BS_d}$ distribution.

\begin{prop}\label{prop-unimodal}
The ${\rm BS_d}$ distribution is unimodal. 
\end{prop}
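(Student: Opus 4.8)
The plan is to transfer the problem from the discrete PMF to the continuous BS density $f_T(t;\boldsymbol{\theta})$, which has a clean unimodality structure. The first step is to notice that the two cases in \eqref{relation} can be unified: since $a(0^+;\boldsymbol{\theta})=-\infty$ and hence $\Phi[a(0^+;\boldsymbol{\theta})]=0$, the fundamental theorem of calculus gives
\[
\mathbb{P}(S=s)=\Phi\big[a(s+1;\boldsymbol{\theta})\big]-\Phi\big[a(s;\boldsymbol{\theta})\big]=\int_s^{s+1} f_T(t;\boldsymbol{\theta})\,dt, \qquad s=0,1,2,\ldots.
\]
Unimodality of the PMF is then equivalent to the first difference $\delta_s=\mathbb{P}(S=s+1)-\mathbb{P}(S=s)$ changing sign at most once, passing from nonnegative to negative. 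After the change of variable $u=t-1$ in the leading integral, I would write $\delta_s=\int_s^{s+1}\big[f_T(t+1;\boldsymbol{\theta})-f_T(t;\boldsymbol{\theta})\big]\,dt=:\int_s^{s+1} g(t)\,dt$, so that everything reduces to controlling the sign of the single auxiliary function $g(t)=f_T(t+1;\boldsymbol{\theta})-f_T(t;\boldsymbol{\theta})$.

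Next I would record the unimodality of the continuous density. Computing $(\log f_T)'(t)$ and clearing denominators, the stationarity condition $(\log f_T)'(t)=0$ reduces on $(0,\infty)$ to the cubic $P(t)=-t^3-\beta(1+\alpha^2)t^2+\beta^2(1-3\alpha^2)t+\beta^3$, and the sign of $f_T'$ equals that of $P$. The coefficient sequence of $P$ exhibits exactly one sign change regardless of the sign of $1-3\alpha^2$, so Descartes' rule of signs yields a unique positive root $t_m$; since $P(0)=\beta^3>0$ and $P(t)\to-\infty$, this gives $f_T'>0$ on $(0,t_m)$ and $f_T'<0$ on $(t_m,\infty)$, i.e. a single mode at $t_m$.

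The heart of the argument is showing that $g$ changes sign exactly once, from positive to negative. Writing $g(t)=\int_t^{t+1} f_T'(u)\,du$, I note that when $[t,t+1]$ lies entirely to the left of $t_m$ the integrand is positive, so $g>0$, and when it lies entirely to the right $g<0$. On the transitional window $t_m-1<t<t_m$ (where $t<t_m<t+1$) I would differentiate to get $g'(t)=f_T'(t+1)-f_T'(t)<0$, because $f_T'(t+1)<0$ while $f_T'(t)>0$; thus $g$ is strictly decreasing on this window, moving from the positive value $g(t_m-1)$ to the negative value $g(t_m)$ and crossing zero exactly once, at some $t_0$. Together with the boundary value $g(0^+)=f_T(1;\boldsymbol{\theta})>0$, this establishes that $g>0$ on $(0,t_0)$ and $g<0$ on $(t_0,\infty)$; the borderline case $t_m\le 1$ is handled identically, using the direct sign of the integral on $t\ge t_m$.

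Finally I would transfer this to $\delta_s$: for every integer $s$ with $s+1\le t_0$ the integrand $g$ is positive on $[s,s+1]$, so $\delta_s>0$, and for $s\ge t_0$ we get $\delta_s<0$; only the single integer $s_0=\lfloor t_0\rfloor$ can sit in the interval straddling $t_0$ and have indeterminate sign, and this merely fixes the exact location of the mode. Hence $(\delta_s)$ is positive, then possibly once indeterminate, then negative — a single sign change — so $\mathbb{P}(S=s)$ increases and then decreases, proving unimodality. The hard part will be Step three: because $\log f_T$ is \emph{not} globally concave (its second derivative becomes positive for large $t$), one cannot invoke log-concavity nor assert that $g$ is monotone on all of $(0,\infty)$, so the proof must localize to the window $t_m-1<t<t_m$, where the sign of $g'$ is forced, and rely on the direct sign of the integral everywhere else.
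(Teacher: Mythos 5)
Your proof is correct, and it shares the paper's overall strategy---representing $\mathbb{P}(S=s)=\int_s^{s+1}f_T(t;\boldsymbol{\theta})\,{\rm d}t$ and transferring unimodality from the continuous BS density---but the technical execution is genuinely different in two places. First, the paper simply cites the unimodality of the continuous BS density (Proposition 7 of \cite{10.1214/19-BJPS448}), whereas you prove it from scratch: your cubic $P(t)=-t^3-\beta(1+\alpha^2)t^2+\beta^2(1-3\alpha^2)t+\beta^3$ is exactly the numerator of $(\log f_T)'$ after clearing the positive factor $2\alpha^2\beta t^2(t+\beta)$, and the Descartes/IVT argument (one sign change in the coefficients whatever the sign of $1-3\alpha^2$, with $P(0)=\beta^3>0$ and $P\to-\infty$) is sound, making your proof self-contained where the paper's is by citation. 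Second, and more substantively, the papers' key step is a shift comparison: using $f_T(t)\geq f_T(t-1)$ for $t\leq t_0$ and $f_T(t)\geq f_T(t+1)$ for $t\geq t_0$ inside the integrals, it obtains $\delta_{s}\geq 0$ for $s\leq\lfloor t_0\rfloor-2$ and $\delta_s\leq 0$ for $s\geq\lfloor t_0\rfloor+1$, which actually leaves \emph{two} consecutive differences, $\delta_{\lfloor t_0\rfloor-1}$ and $\delta_{\lfloor t_0\rfloor}$, undetermined (the paper acknowledges only one, and ruling out the unimodality-breaking pattern $(-,+)$ for that pair needs a small extra argument comparing $f_T(\lfloor t_0\rfloor)$ with $f_T(\lfloor t_0\rfloor+1)$, which the paper does not supply). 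Your route avoids this looseness entirely: by proving that $g(t)=f_T(t+1;\boldsymbol{\theta})-f_T(t;\boldsymbol{\theta})$ has a single sign change at some $t_0\in(t_m-1,t_m)$---positive left of the transition window, negative right of it, and strictly decreasing on the window since $g'(t)=f_T'(t+1)-f_T'(t)<0$ there---you leave exactly one indeterminate difference $\delta_{\lfloor t_0\rfloor}$, and a sign pattern $+,\ldots,+,?,-,\ldots,-$ with a single ambiguous entry is unimodal no matter how the ambiguity resolves. Your closing caveat is also apt: $\log f_T$ is indeed not globally concave for the BS distribution (its second derivative behaves like $1/(2t^2)>0$ in the tail), so no global monotonicity of $g$ is available and the localization to the window is the right move. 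In short, the paper's proof is shorter but leans on an external result and is slightly loose at the mode-straddling cells; yours is longer but self-contained and airtight at precisely that point.
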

\begin{proof}
Let  $T\sim {\rm BS}(\boldsymbol{\theta})$ be a random variable with continuous ${\rm BS}$ distribution. Let $f_T(t;\boldsymbol{\theta})$, $t>0$ be their respective PDF.
It is well-known that this distribution is unimodal (see Proposition 7 in \cite{10.1214/19-BJPS448}), then there exists a unique point $t_0>0$
such that its PDF satisfies the following inequalities:
\[
f_T(t;\boldsymbol{\theta})\geq  f_T(t-1;\boldsymbol{\theta}), \quad \text{for all} \ t\leq t_0,
\]
and
\[
f_T(t;\boldsymbol{\theta})\geq  f_T(t+1;\boldsymbol{\theta}), \quad \text{for all} \ t\geq t_0.
\]
If $s$ is a natural number such that $s\leq \lfloor t_0\rfloor-1$, then
\[
\mathbb{P}(S=s)
=
\int_{s}^{s+1} f_T(t;\boldsymbol{\theta}) \, {\rm d}t
\geq 
\int_{s}^{s+1} f_T(t-1;\boldsymbol{\theta}) \, {\rm d}t
=
\mathbb{P}(S=s-1),
\]
or equivalently,
\begin{align*}
\mathbb{P}(S=s)- \mathbb{P}(S=s-1)\geq 0 \quad \text{for all} \ s\leq \lfloor t_0\rfloor-1.
\end{align*}
Similarly, for $s\geq \lfloor t_0\rfloor+1$, we obtain
\[
\mathbb{P}(S=s+1)- \mathbb{P}(S=s)
\leq  0.
\]
It follows that $\big\{\mathbb{P}(S=s): s=0,1,2,\ldots\big\}$ is unimodal, whatever sign $\mathbb{P}(S=\lfloor t_0\rfloor)- \mathbb{P}(S=\lfloor t_0\rfloor-1)$ may have. 
\end{proof}
\begin{rem}
As a sub-product of the proof of Proposition \ref{prop-unimodal},
the mode of the ${\rm BS_d}$ distribution is $\lfloor t_0\rfloor$,
where 
$t_0$ is the mode of the corresponding continuous ${\rm BS}$ distribution.
\end{rem}
\begin{prop}
The ${\rm BS_d}$ distribution has a unique mode in the set $\big\{s=0,1,\ldots : 0 \leq s\leq \lfloor \beta\rfloor\big\}$.
\end{prop}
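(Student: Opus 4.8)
The plan is to build directly on the preceding Proposition~\ref{prop-unimodal} and its remark, which together assert that the ${\rm BS_d}$ distribution is unimodal with mode $\lfloor t_0\rfloor$, where $t_0>0$ is the mode of the corresponding continuous ${\rm BS}(\boldsymbol{\theta})$ distribution. Uniqueness of the discrete mode is therefore already in hand, and the only thing left to establish is the localization $\lfloor t_0\rfloor\in\{0,1,\ldots,\lfloor\beta\rfloor\}$. Since $t_0>0$ gives $\lfloor t_0\rfloor\geq 0$ immediately, the whole statement reduces to the single inequality $t_0<\beta$: indeed, $t_0<\beta$ forces $\lfloor t_0\rfloor\leq t_0<\beta$, and as $\lfloor t_0\rfloor$ is an integer strictly below $\beta$, it must satisfy $\lfloor t_0\rfloor\leq\lfloor\beta\rfloor$.

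To prove $0<t_0<\beta$ I would work with the sign of the derivative of the log-density. Writing $a(t;\boldsymbol{\theta})^2=\alpha^{-2}(t/\beta-2+\beta/t)$ and $\log a'(t;\boldsymbol{\theta})=\log(t+\beta)-\tfrac32\log t+\text{const}$, one gets
\begin{align*}
\frac{\rm d}{{\rm d}t}\log f_T(t;\boldsymbol{\theta})
=
-\frac{1}{2\alpha^2}\left(\frac{1}{\beta}-\frac{\beta}{t^2}\right)
+\frac{1}{t+\beta}-\frac{3}{2t}.
\end{align*}
The next step is to read off the two boundary signs. As $t\to 0^+$ the term $\beta/(2\alpha^2 t^2)$ is of order $t^{-2}$ and dominates the $-3/(2t)$ term, so the derivative tends to $+\infty$; hence $f_T$ is strictly increasing near the origin and $t_0>0$. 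Evaluating the derivative at $t=\beta$, the first bracket vanishes and the remaining terms give $1/(2\beta)-3/(2\beta)=-1/\beta<0$, so $f_T$ is strictly decreasing at $\beta$.

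Finally I would invoke the unimodality of the continuous ${\rm BS}$ density (Proposition~7 of \cite{10.1214/19-BJPS448}, already used above): a unimodal density that is increasing near $0$ and decreasing at $\beta$ must attain its unique maximum at some $t_0\in(0,\beta)$. This yields $t_0<\beta$, and combining with the first paragraph gives $\lfloor t_0\rfloor\in\{0,1,\ldots,\lfloor\beta\rfloor\}$ together with uniqueness, as claimed. The main obstacle is the middle step, namely pinning down the sign of the log-derivative at the two ends; once the clean value $-1/\beta$ at $t=\beta$ is obtained (reflecting that $\beta$ is the median of the positively skewed ${\rm BS}$ law, which lies to the right of the mode), the conclusion follows with no further work.
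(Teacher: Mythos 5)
Your proof is correct and takes essentially the same route as the paper: uniqueness comes from Proposition~\ref{prop-unimodal}, and the localization hinges on the sign of the log-derivative of the continuous BS density at $t=\beta$ --- your explicit value $-1/\beta$ is precisely the paper's $a''(\beta;\boldsymbol{\theta})/a'(\beta;\boldsymbol{\theta})<0$. The only packaging difference is that the paper cites Lemma~2.1 of \cite{10.1214/19-BJPS448} for decreasingness on $t>\beta$ and transfers it to the PMF (decreasing for $s\geq\lfloor\beta\rfloor+1$) via the integral comparison from the unimodality proof, whereas you compute the full derivative directly and conclude $t_0<\beta$, hence $\lfloor t_0\rfloor\leq\lfloor\beta\rfloor$, via the remark's mode formula; both arguments rest on the same key computation and yield the same conclusion.
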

\begin{proof}
Proposition \ref{prop-unimodal} guarantees the uniqueness of mode. It remains to prove that the ${\rm BS_d}$ distribution $\mathbb{P}(S=s)$ is decreasing for all $s\geq \lfloor \beta\rfloor +1$. We prove this by comparing the continuous BS distribution with the corresponding ${\rm BS_d}$ distribution. Indeed, in Lemma 2.1. of \cite{10.1214/19-BJPS448} is proved that the PDF $f_T(t;\boldsymbol{\theta})$ of the continuous BS distribution is a decreasing function when $t>\beta$. However this extends to every $t\geq\beta$ because
$\big\{{{\rm d}\over {\rm d}t}
\log [f_T(t;\boldsymbol{\theta})]\big\}
\vert_{t=\beta}=a''(\beta;\boldsymbol{\theta})/a'(\beta;\boldsymbol{\theta}) <0$.
Hence, as a sub-product of the proof of Proposition \ref{prop-unimodal}, it follows that the ${\rm BS_d}$ PMF \eqref{relation} is decreasing for all $s\geq \lfloor \beta\rfloor +1$. Thus, we have completed the proof.
\end{proof}

\subsubsection{Order statistics}
\begin{prop}
If $S_1,\ldots,S_n$ is a sequence of independent and identically distributed random
	variables such that $S_1\sim {\rm BS_d}(\boldsymbol{\theta})$, 
	then,	
	the $i$th; $i = 1, 2,\ldots, n$; order statistic of the ${\rm BS_d}$
	distribution, denoted $S_{(i)}$, can be written as
	\[
	\mathbb{P}(S_{(i)}\leq  s)
	=
	\sum_{k=i}^{n}\sum_{j=0}^{n-k}\binom{n}{k}\binom{n-k}{j} (-1)^j
	{F}\big[a({\lfloor s\rfloor};\boldsymbol{\theta}); k+j\big],
	\quad s\geq  1,
	\]
	where 
	$
	{F}[\cdot; k+j]
	$
	denotes the CDF of the power normal distribution (PND).
	Different properties of the PND have been discussed by \cite{RG08}.
\end{prop}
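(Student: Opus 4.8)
The plan is to reduce the claim to the classical distributional identity for order statistics and then to recognize the resulting power of the standard normal CDF as a power normal CDF. First I would recall that, for any real $s$, the event $\{S_{(i)}\leq s\}$ is exactly the event that at least $i$ of the i.i.d.\ variables $S_1,\ldots,S_n$ lie in $(-\infty,s]$. Since each indicator $\mathds{1}_{\{S_\ell\leq s\}}$ is a Bernoulli trial with success probability $F(s;\boldsymbol{\theta})=\mathbb{P}(S_1\leq s)$, the number of successes is Binomial$\big(n,F(s;\boldsymbol{\theta})\big)$, whence
\[
\mathbb{P}(S_{(i)}\leq s)
=
\sum_{k=i}^{n}\binom{n}{k}\,[F(s;\boldsymbol{\theta})]^{k}\,[1-F(s;\boldsymbol{\theta})]^{n-k}.
\]
This step uses only the parent CDF, so it holds verbatim for the discrete law ${\rm BS_d}(\boldsymbol{\theta})$ and requires no continuity.

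Next I would expand the tail factor by the binomial theorem, $[1-F(s;\boldsymbol{\theta})]^{n-k}=\sum_{j=0}^{n-k}\binom{n-k}{j}(-1)^{j}[F(s;\boldsymbol{\theta})]^{j}$, and then interchange the two finite summations to collect the common power $[F(s;\boldsymbol{\theta})]^{k+j}$. This is pure bookkeeping and yields
\[
\mathbb{P}(S_{(i)}\leq s)
=
\sum_{k=i}^{n}\sum_{j=0}^{n-k}\binom{n}{k}\binom{n-k}{j}(-1)^{j}\,[F(s;\boldsymbol{\theta})]^{k+j}.
\]

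Finally I would substitute the explicit ${\rm BS_d}$ distribution function. From the CDF of $S$ recorded above, for $s\geq 1$ one has $\mathds{1}_{\{s\geq 0\}}=1$ and $F(s;\boldsymbol{\theta})=\Phi\big[a(\lfloor s\rfloor+1;\boldsymbol{\theta})\big]$, so that $[F(s;\boldsymbol{\theta})]^{k+j}=\big\{\Phi\big[a(\lfloor s\rfloor+1;\boldsymbol{\theta})\big]\big\}^{k+j}$. The key and only substantive step is to observe that $x\mapsto[\Phi(x)]^{m}$ is precisely the CDF of the power normal distribution with power parameter $m$; writing this CDF as $F[\,\cdot\,;m]$ and taking $m=k+j$ turns the bracketed power into $F\big[a(\lfloor s\rfloor+1;\boldsymbol{\theta});k+j\big]$ and delivers the stated expression. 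I expect the main obstacle to be nothing more than this identification with the PND, together with carefully tracking the argument of $a(\cdot;\boldsymbol{\theta})$ through the floor operation; the remaining manipulations are the standard combinatorics of order statistics.
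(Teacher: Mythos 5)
Your proposal is correct and takes essentially the same route as the paper, which likewise invokes the classical identity $\mathbb{P}(S_{(i)}\leq s)=\sum_{k=i}^{n}\binom{n}{k}[F(s;\boldsymbol{\theta})]^k[R(s;\boldsymbol{\theta})]^{n-k}$ (citing Shahbaz et al.), expands $[R(s;\boldsymbol{\theta})]^{n-k}=[1-F(s;\boldsymbol{\theta})]^{n-k}$ by the Newton binomial formula, and identifies $[\Phi(x)]^{m}$ as the power normal CDF --- your write-up simply supplies the details the paper leaves implicit. One remark: your substitution correctly produces the argument $a(\lfloor s\rfloor+1;\boldsymbol{\theta})$, since the ${\rm BS_d}$ CDF is $F(s;\boldsymbol{\theta})=\Phi\big[a(\lfloor s\rfloor+1;\boldsymbol{\theta})\big]$ for $s\geq 0$, so the proposition's displayed formula with $a(\lfloor s\rfloor;\boldsymbol{\theta})$ appears to contain a typo (an off-by-one in the floor argument), and your version, not the printed one, is the consistent conclusion --- you should flag this rather than assert that your expression ``delivers the stated expression'' verbatim.
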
	
\begin{proof}
It is well-known that 
$\mathbb{P}(S_{(i)}\leq  s)
=
\sum_{k=i}^{n}\binom{n}{k}[F(s;\bm{\theta})]^k [R(s;\bm{\theta})]^{n-k}$, $s\geq  1$ 
(see Item (2.7) of \cite{shahbaz2016ordered}).
Using the Newton binomial formula and the definition of a PND, the proof follows.
\end{proof}
\begin{rem}
By using the identity
$
\sum_{k=i}^{n}\binom{n}{k}p^k(1-p)^{n-k}
=
i\binom{n}{i}\int_{0}^{p} t^{i-1}(1-t)^{n-i}\, {\rm d}t,
$
the distribution function of $S_{(i)}$ can also be written as
\[
{P}(S_{(i)}\leq  s)
=
{\Gamma(n+1)\over \Gamma(i)\Gamma(n-i+1)}\,
\int_{0}^{\Phi[a(\lfloor s\rfloor;\boldsymbol{\theta})]}
t^{i-1}(1-t)^{n-i}\, {\rm d}t,
\quad 
s\geq 1,
\]
where $\Gamma(\cdot)$ is the Gamma function.
\end{rem}
\subsubsection{Mean residual life function and variance residual life function}
Let $S\sim {\rm BS_d}(\boldsymbol{\theta})$, the mean residual life function (MRLF) and 
variance residual life function (VRLF) are defined   by 

$$
\mu_S(k)={\mathbb{E}(S-k\vert S\geq k)}
={\sum_{s=k}^{\infty} R(s;\boldsymbol{\theta})\over R(k-1;\boldsymbol{\theta})}
$$
and $$
 \sigma^2_S(k)={\rm Var}(S-k\vert S\geq k)
=
2{\sum_{s=k}^{\infty} s\, R(s;\boldsymbol{\theta})\over R(k-1;\boldsymbol{\theta})}
-(2k-1)\mu_S(k)-\mu^2_S(k)
,
$$
respectively, where $R(s;\boldsymbol{\theta})$ is given in \eqref{id-R} and $  k=0,1,\ldots,$.
%
\begin{prop}
Let $S\sim {\rm BS_d}(\boldsymbol{\theta})$  with   $\boldsymbol{\theta}$ belongs to the set 
$
\Theta=
\big\{\boldsymbol{\theta}_*\in (0,\infty)^2 :\ C(t;\boldsymbol{\theta}_*)=a''(t;\boldsymbol{\theta}_*)-
a(t;\boldsymbol{\theta}_*)[a'(t;\boldsymbol{\theta}_*)]^2>0, \ \ \forall \ t>0 \big\}.
$
Then,
\begin{itemize}
\item[\rm (a)]  $S$ has decreasing MRLF;
\item[\rm (b)] $S$ has increasing HR;
\item[\rm (c)] $S$ has decreasing VRLF,
\end{itemize}
whenever $\mathbb{P}(S>0)=1$.
\end{prop}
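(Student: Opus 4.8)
The plan is to reduce all three statements to a single analytic fact about the parent continuous distribution, namely that $T\sim{\rm BS}(\boldsymbol{\theta})$ has increasing hazard rate (IHR) when $\boldsymbol{\theta}\in\Theta$, and then to transfer this to $S=\lfloor T\rfloor$ through the identity $R(s;\boldsymbol{\theta})=1-\Phi[a(s+1;\boldsymbol{\theta})]=R_T(s+1)$ for integer $s$, which follows from Proposition \ref{prop;1}. I would prove (b) first, then derive (a) from (b), and finally obtain (c) from (a) and (b) together.

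For (b) I would start from the continuous hazard $h_T=f_T/R_T$ and compute its logarithmic derivative. Using $\log f_T(t)=\textrm{const}-\tfrac12 a(t;\boldsymbol{\theta})^2+\log a'(t;\boldsymbol{\theta})$, which gives $\tfrac{{\rm d}}{{\rm d}t}\log f_T=C(t;\boldsymbol{\theta})/a'(t;\boldsymbol{\theta})$, together with $\tfrac{{\rm d}}{{\rm d}t}\log R_T=-h_T$, one obtains
\[
\frac{{\rm d}}{{\rm d}t}\log h_T(t)
=\frac{a''(t;\boldsymbol{\theta})-a(t;\boldsymbol{\theta})[a'(t;\boldsymbol{\theta})]^2}{a'(t;\boldsymbol{\theta})}+h_T(t)
=\frac{C(t;\boldsymbol{\theta})}{a'(t;\boldsymbol{\theta})}+h_T(t).
\]
Since $a'(t;\boldsymbol{\theta})>0$ and $h_T(t)>0$ for every $t>0$, the defining inequality $C(t;\boldsymbol{\theta})>0$ of $\Theta$ forces $\tfrac{{\rm d}}{{\rm d}t}\log h_T(t)>0$, so $h_T$ is increasing and $T$ is IHR; equivalently $\log R_T$ is concave on $(0,\infty)$. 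Restricting a concave function to the integers preserves midpoint concavity, so $R_T(s+1)^2\ge R_T(s)R_T(s+2)$, i.e. $R(s;\boldsymbol{\theta})/R(s-1;\boldsymbol{\theta})$ is nonincreasing in $s$. Recalling $H(s;\boldsymbol{\theta})=1-R(s;\boldsymbol{\theta})/R(s-1;\boldsymbol{\theta})$, this is exactly the statement that $S$ has increasing discrete hazard, proving (b).

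Part (a) then follows from (b) by the standard discrete implication IHR $\Rightarrow$ DMRL. Writing $\mu_S(k)=\sum_{s\ge k}R(s;\boldsymbol{\theta})/R(k-1;\boldsymbol{\theta})$ and using the telescoped form $R(s;\boldsymbol{\theta})/R(k-1;\boldsymbol{\theta})=\prod_{y=k}^{s}[1-H(y;\boldsymbol{\theta})]$ recorded after \eqref{id-R}, I would compare $\mu_S(k)$ with $\mu_S(k+1)$ term by term: monotonicity of $H$ makes each ratio in $\mu_S(k+1)$ dominated by the corresponding ratio appearing in $\mu_S(k)$, and summing gives $\mu_S(k+1)\le\mu_S(k)$. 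The hypothesis $\mathbb{P}(S>0)=1$ enters here only to guarantee $R(k-1;\boldsymbol{\theta})>0$, so that all quotients are well defined.

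The main obstacle is (c). Since DMRL alone does not in general imply decreasing VRLF, I would not try to deduce (c) from (a) by itself, but use both (a) and (b). The plan is to derive a recurrence expressing $\sigma^2_S(k)-\sigma^2_S(k+1)$ in terms of $\mu_S(k)$, $\mu_S(k+1)$ and $H(k;\boldsymbol{\theta})$, by substituting the definitions of the partial sums $\sum_{s\ge k}R(s;\boldsymbol{\theta})$ and $\sum_{s\ge k}s\,R(s;\boldsymbol{\theta})$ and using the one-step relation $R(k;\boldsymbol{\theta})=[1-H(k;\boldsymbol{\theta})]\,R(k-1;\boldsymbol{\theta})$. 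Once this difference is written as a combination of the nonnegative quantity $\mu_S(k)-\mu_S(k+1)$ from (a) and a factor controlled by the monotonicity of $H$ from (b), signing it yields $\sigma^2_S(k)\ge\sigma^2_S(k+1)$. Carrying out this bookkeeping cleanly, and checking it at the boundary $k=0$, is the only genuinely delicate step; everything else is a direct consequence of the IHR property established in (b).
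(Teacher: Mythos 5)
Your parts (a) and (b) are correct. For (b) you travel essentially the same road as the paper: the paper computes ${\rm d}^2\log[R(t;\boldsymbol{\theta})]/{\rm d}t^2$ for the continuous survival function directly and finds it equals $-\frac{\phi[a]}{1-\Phi[a]}\bigl\{\frac{\phi[a](a')^2}{1-\Phi[a]}+C(t;\boldsymbol{\theta})\bigr\}<0$, while you compute ${\rm d}\log h_T/{\rm d}t=C/a'+h_T>0$; these are equivalent statements (IHR $\Leftrightarrow$ log-concave $R_T$), and both proofs then restrict the concavity to integer points to get $R_T(s+1)^2\ge R_T(s)R_T(s+2)$, hence increasing discrete hazard. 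For (a) you genuinely diverge: the paper simply invokes Theorem 2.1 of \cite{doi:10.1080/03610926.2014.982827} (discrete IHR $\Rightarrow$ DMRL), whereas you prove that implication from scratch by the termwise comparison of $\prod_{y=k+1}^{s+1}[1-H(y;\boldsymbol{\theta})]$ against $\prod_{y=k}^{s}[1-H(y;\boldsymbol{\theta})]$; this is correct and makes the step self-contained at the cost of a page of bookkeeping. (One small slip: $R(k-1;\boldsymbol{\theta})>0$ holds automatically here because $\Phi[a]<1$ strictly, so the hypothesis $\mathbb{P}(S>0)=1$ is not needed for well-definedness of the quotients; its role is tied to the boundary index $s=0$ and the support convention in the cited reliability theorems, not to positivity of $R$.)

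Part (c), however, is a genuine gap. What you offer is a plan — ``derive a recurrence expressing $\sigma^2_S(k)-\sigma^2_S(k+1)$ in terms of $\mu_S(k)$, $\mu_S(k+1)$ and $H(k;\boldsymbol{\theta})$ \ldots signing it yields the result'' — but the recurrence is never written down and the sign is never verified; you yourself flag the bookkeeping as ``the only genuinely delicate step'' and then do not carry it out, so nothing is actually proved. Worse, the premise that forces you into this detour is doubtful: you assert that DMRL alone does not imply decreasing VRLF, but in the discrete setting the implication DMRL $\Rightarrow$ decreasing variance residual life is precisely Theorem 2.2 of \cite{doi:10.1080/03610926.2014.982827}, and that is exactly how the paper disposes of (c) — it combines Item (a) with that theorem, with no use of (b) at all. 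So to close your argument you should either cite (or reprove, in the spirit of your part (a)) that known implication, in which case (c) follows immediately from your (a), or else actually execute the claimed recurrence and sign computation; as submitted, Item (c) stands unproven.
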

\begin{proof} 
For  $\boldsymbol{\theta}\in\Theta$,  we  have 
\begin{align*}
{{\rm d}^2\over {\rm d}t^2}\log\big[R(t;\boldsymbol{\theta})\big]
&=
-
{\phi\big[a(t;\boldsymbol{\theta})\big]\over 1-\Phi\big[a(t;\boldsymbol{\theta})\big]}\,
\left\{
{\phi\big[a(t;\boldsymbol{\theta})\big]
\big[a'(t;\boldsymbol{\theta})\big]^2  \over 1-\Phi\big[a(t;\boldsymbol{\theta})\big]}
+
C(t;\boldsymbol{\theta})
\right\}
<0 \quad \text{for all} \ t>0.
\end{align*}	
In other words, the function $\log[R(t;\boldsymbol{\theta})]$ is concave. This condition
implies that 
$
\log \big[R({t_1+t_2\over 2};\boldsymbol{\theta})\big]
\geq 
{1\over 2} \log \big[R(t_1;\boldsymbol{\theta})\big] 
+
{1\over 2} \log \big[R(t_2;\boldsymbol{\theta})\big]
$ 
or equivalently that 
$
\big[R({t_1+t_2\over 2};\boldsymbol{\theta})\big]^2
\ge 
R(t_1;\boldsymbol{\theta}) R(t_2;\boldsymbol{\theta})
$ 
for all 
$ 
t_1,t_2>0.
$ 
Hence, taking
$t_1=s+2$ and $t_2=s$ for $s=1,2,3,\ldots$, we have
\begin{align*}
\big[R(s+1;\boldsymbol{\theta})\big]^2
\ge R(s+2;\boldsymbol{\theta}) R(s;\boldsymbol{\theta})
\quad \Longleftrightarrow \quad
H(s+1;\boldsymbol{\theta})\leqslant H(s+2;\boldsymbol{\theta}).
\end{align*}
That is, 
$S$ has increasing hazard rate $H(\cdot;\boldsymbol{\theta})$.
Then, by Theorem 2.1 of 
\cite{doi:10.1080/03610926.2014.982827}, it follows that $S$ has decreasing mean residual life function. This proves the statement in Items (a) and (b).
Finally, the proof of Item (c) follows directly by combining Item (a) with Theorem 2.2 in
\cite{doi:10.1080/03610926.2014.982827}.
\end{proof}

\subsubsection{Moments properties}
\begin{prop}\label{exist-moments}
The distribution of a random variable $S$ with ${\rm BS_d}$ distribution has all moments.
\end{prop}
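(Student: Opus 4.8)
The plan is to exploit the defining relation $S=\lfloor T\rfloor$ with $T\sim {\rm BS}(\boldsymbol{\theta})$, thereby reducing the claim to the (known) finiteness of all moments of the continuous BS distribution. First I would record the pointwise domination: since $\lfloor T\rfloor\leq T$ and both $S$ and $T$ are nonnegative, we have $0\leq S^n\leq T^n$ for every $n\in\{1,2,\ldots\}$. Taking expectations yields $\mathbb{E}(S^n)\leq \mathbb{E}(T^n)$, so it suffices to prove that $\mathbb{E}(T^n)<\infty$ for all $n$.

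For the continuous part I would use the standard stochastic representation $T=\tfrac{\beta}{4}\big[\alpha Z+\sqrt{\alpha^2Z^2+4}\,\big]^2$ with $Z\sim N(0,1)$, which is consistent with the quantile formula \eqref{quantile} upon substituting $Z=\Phi^{-1}(p)$. From the elementary inequality $\sqrt{\alpha^2Z^2+4}\leq \alpha|Z|+2$ one obtains $0<\alpha Z+\sqrt{\alpha^2Z^2+4}\leq 2\big(\alpha|Z|+1\big)$, and hence $T\leq \beta\big(\alpha|Z|+1\big)^2$. Consequently $T^n$ is dominated by a fixed polynomial in $|Z|$ of degree $2n$. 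Because the standard normal law possesses moments of every order, $\mathbb{E}(T^n)<\infty$ for all $n$, and the chain $\mathbb{E}(S^n)\leq\mathbb{E}(T^n)<\infty$ closes the argument.

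An alternative route, should one wish to argue directly with the PMF, is to bound the tail $\mathbb{P}(S\geq s)=\mathbb{P}\big(T\geq \lfloor s\rfloor+1\big)=1-\Phi\big[a(s+1;\boldsymbol{\theta})\big]$ (Item (d) of Proposition \ref{prop;1}) via the Gaussian Mills-ratio estimate $1-\Phi(x)\leq \phi(x)/x$. Since $a(s;\boldsymbol{\theta})$ grows like $\sqrt{s/\beta}/\alpha$, this gives an exponentially decaying tail, so $\sum_{s} s^n\,\mathbb{P}(S=s)$ converges by comparison. I expect the only delicate point to be making the growth/tail estimate uniform enough to handle every $n$ at once; the domination argument through $T$ avoids this difficulty entirely and is the cleaner of the two, so I would present it as the main proof.
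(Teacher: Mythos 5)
Your proof is correct, and it takes a genuinely different route from the paper's. The paper's own argument is a two-line citation: it asserts that Proposition \ref{prop-unimodal} shows $S$ is \emph{strongly} unimodal and then invokes Theorem 7 of \cite{doi:10.1080/01621459.1971.10482273}, which states that strongly unimodal (i.e., log-concave) lattice distributions possess moments of every order. You instead argue by domination through the underlying continuous variable: $S=\lfloor T\rfloor\leq T$ gives $\mathbb{E}(S^n)\leq\mathbb{E}(T^n)$, the standard representation $T={\beta\over 4}\big(\alpha Z+\sqrt{\alpha^2Z^2+4}\,\big)^2$ with $Z\sim N(0,1)$ is consistent with \eqref{quantile}, and the elementary bound $\sqrt{\alpha^2Z^2+4}\leq \alpha|Z|+2$ yields $T\leq\beta\big(\alpha|Z|+1\big)^2$, so every $\mathbb{E}(T^n)$ is controlled by normal moments; each of these steps checks out. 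Beyond being self-contained, your route has a concrete advantage: the paper's citation chain is delicate, since Proposition \ref{prop-unimodal} establishes only unimodality of the PMF, not the log-concavity that the Keilson--Gerber theorem actually requires, and plain unimodality does not imply finite moments (a decreasing PMF proportional to $(s+1)^{-2}$ is unimodal with infinite mean). Your domination argument sidesteps this issue entirely, and your sketched alternative via the Mills-ratio bound $1-\Phi(x)\leq \phi(x)/x$ applied to the tail $\mathbb{P}(S\geq s)=1-\Phi\big[a(s+1;\boldsymbol{\theta})\big]$ from Proposition \ref{prop;1} would work as well, since $a(s;\boldsymbol{\theta})$ grows like $\sqrt{s/\beta}/\alpha$ and gives exponentially decaying tails uniformly over all $n$. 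What the paper's approach would buy, if the log-concavity were verified, is a stronger structural fact about the distribution; as a proof of this proposition alone, your version is the more economical and the more airtight of the two.
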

\begin{proof}
In Proposition \ref{prop-unimodal} is proved that $S$ has a
strongly unimodal distribution (see \cite{doi:10.1080/01621459.1971.10482273}, for a formal definition). 
Since that all strongly unimodal distribution have all moments (see Theorem 7 in \cite{doi:10.1080/01621459.1971.10482273}), the proof of the proposition follows. 
\end{proof}

\begin{prop}
If $S\sim {\rm BS_d}(\boldsymbol{\theta})$ is a random variable, for each natural number $r$, we have
\begin{eqnarray*}
\begin{array}{lllll}
 \textrm{\rm(a)} & \displaystyle
		\mathbb{E}(S^r) =  
		\sum_{s=0}^{\infty}\big[(s+1)^r-s^r\big]
		\left\{1-\Phi\big[a(s+1;\boldsymbol{\theta})\big]\right\};
\\[0,5cm]
\textrm{\rm(b)} 	& \displaystyle
		\mathbb{E}(S^r)
		=
		\sum_{s=0}^{\infty}\sum_{k=0}^{r}
		\sum_{i=0}^{r-k}
		\binom{r-k}{i}
		s^{k+i}\, 
		\left\{1-\Phi\big[a(s+1;\boldsymbol{\theta})\big]\right\};
\\[0,65cm]
\textrm{\rm(c)} 	& \displaystyle
		{\rm Var}(S)=
		2\sum_{s=0}^{\infty} s \left\{1-\Phi\big[a(s+1;\boldsymbol{\theta})\big]\right\}
		\\[0,2cm]
		&\qquad\quad\,+ \displaystyle
		\sum_{s=0}^{\infty} \left\{1-\Phi\big[a(s+1;\boldsymbol{\theta})\big]\right\}
		\bigg[
		1-\sum_{s=0}^{\infty} \left\{1-\Phi\big[a(s+1;\boldsymbol{\theta})\big]\right\}
		\bigg].
\end{array}	
\end{eqnarray*}
\end{prop}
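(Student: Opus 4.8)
The plan is to reduce everything to the reliability function and to the tail probabilities of $S$. For an integer $s\ge 0$, equation \eqref{id-R} gives $R(s;\boldsymbol{\theta})=1-\Phi[a(s+1;\boldsymbol{\theta})]=\mathbb{P}(S\ge s+1)$, so every series appearing in the statement is really a tail sum of $S$. Before touching the algebra I would record the standing facts that make all manipulations legitimate: Proposition \ref{exist-moments} guarantees that all moments of $S$ are finite, and every summand below is non-negative, so Tonelli's theorem lets me reorder double sums without any further argument while finiteness of the totals is exactly what Proposition \ref{exist-moments} supplies. This settles all convergence and interchange questions up front.

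For Item (a) I would first establish the standard tail representation
\[
\mathbb{E}(S^r)=\sum_{s=1}^{\infty}\big[s^r-(s-1)^r\big]\,\mathbb{P}(S\ge s).
\]
To prove it I would write $\mathbb{P}(S\ge s)=\sum_{j=s}^{\infty}\mathbb{P}(S=j)$, substitute, swap the two (non-negative) sums, and collapse the inner telescoping sum $\sum_{s=1}^{j}\big[s^r-(s-1)^r\big]=j^r$, which returns $\sum_{j=1}^{\infty}j^r\,\mathbb{P}(S=j)=\mathbb{E}(S^r)$. A shift of index $s\mapsto s+1$ then rewrites this as $\mathbb{E}(S^r)=\sum_{s=0}^{\infty}\big[(s+1)^r-s^r\big]\,\mathbb{P}(S\ge s+1)$, and inserting $\mathbb{P}(S\ge s+1)=R(s;\boldsymbol{\theta})=1-\Phi[a(s+1;\boldsymbol{\theta})]$ gives precisely the claimed identity.

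Item (b) I would obtain from Item (a) by a purely algebraic step: expand the finite difference $(s+1)^r-s^r$ through the binomial theorem and regroup the resulting monomials in $s$, pushing the (already justified) summation over $s$ inside, so as to collect everything into the multiple-sum form recorded in the statement. Item (c) I would get by specialising Item (a) to $r=1$ and $r=2$: the telescoping bracket yields $\mathbb{E}(S)=\sum_{s=0}^{\infty}R(s;\boldsymbol{\theta})$ and $\mathbb{E}(S^2)=2\sum_{s=0}^{\infty}s\,R(s;\boldsymbol{\theta})+\sum_{s=0}^{\infty}R(s;\boldsymbol{\theta})$, after which $\mathrm{Var}(S)=\mathbb{E}(S^2)-[\mathbb{E}(S)]^2$ and factoring out $\sum_{s}R(s;\boldsymbol{\theta})$ produce the displayed expression.

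The only genuinely delicate point sits behind the scenes in Item (a): the reordering of the double sum. Here it is harmless, since every term is non-negative (so Tonelli applies unconditionally) and the total is finite by Proposition \ref{exist-moments}. The remaining work—above all the binomial and index bookkeeping needed to reach the exact multiple-sum shape in Item (b)—is routine and error-prone rather than conceptually hard, so I would carry it out with care but expect no real obstruction.
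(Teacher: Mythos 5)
Your argument is essentially the paper's own proof: for (a) you use the same telescoping tail-sum representation, with the interchange of summations justified exactly as the paper does (non-negativity of the terms plus finiteness of all moments from Proposition \ref{exist-moments}), and for (c) the same specialisation of (a) to $r=1$ and $r=2$ followed by $\mathrm{Var}(S)=\mathbb{E}(S^{2})-[\mathbb{E}(S)]^{2}$ and factoring. The one substantive divergence is in (b): the binomial theorem applied directly to $(s+1)^{r}-s^{r}$ yields the \emph{single} sum $\sum_{k=0}^{r-1}\binom{r}{k}s^{k}$, and no amount of regrouping of those monomials produces the displayed double sum, since $\sum_{k=0}^{r}\sum_{i=0}^{r-k}\binom{r-k}{i}s^{k+i}=\sum_{k=0}^{r}s^{k}(s+1)^{r-k}$, which is a genuinely different polynomial identity. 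The paper's route is to first apply the factorization $a^{r}-b^{r}=(a-b)\sum_{k}a^{r-k}b^{k}$ with $a=s+1$, $b=s$ (so the difference becomes $\sum_{k}(s+1)^{r-k}s^{k}$) and only then expand $(s+1)^{r-k}$ binomially; that factorization is the step actually doing the work, and your ``expand and regroup'' gloss skips it. The omission is minor in substance --- your method still yields a correct series for $\mathbb{E}(S^{r})$, just not the displayed shape --- but as a proof of the statement as written, (b) needs the factorization identity made explicit.
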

\begin{proof}
The whole proof follows closely Proposition 2 of \cite{svpbm:21} and we present it for the sake of completeness. 
We emphasize that the statements of Items (a), (b) and (c) are  valid for any
discrete random variable $S$ with support $\{0,1,\ldots\}$.

By using the telescopic series $\sum_{x=0}^{i-1} [(x+1)^r-x^r]=i^r$, we have
	\begin{align*}
	\mathbb{E}(S^r)
	&=
	\sum_{i=0}^{\infty} \sum_{s=0}^{\infty} \mathds{1}_{\{s<i\}}  [(s+1)^r-s^r] \,
	\mathbb{P}(S=i)
	\\
	&=
	\sum_{s=0}^{\infty}  [(s+1)^r-s^r]  \sum_{i=0}^{\infty} \mathds{1}_{\{i>s\}} \,
	\mathbb{P}(S=i),
	\end{align*}
	where in the second equality we exchange the orders of the summations because
	\begin{align*}
	\sum_{s=0}^{\infty} \mathds{1}_{\{s<i\}}   \big|(s+1)^r-s^r\big| \,
	\mathbb{P}(S=i)
	= 
	\sum_{s=0}^{\infty} \mathds{1}_{\{s<i\}} \cdot [(s+1)^r-s^r] \,
	\mathbb{P}(S=i)
	=
	i^r \mathbb{P}(S=i),
	\end{align*}
	is finite for each $i=0,1,\ldots$; and because
	$\sum_{i=0}^{\infty}i^r \mathbb{P}(S=i)= \mathbb{E}(S^r)$ always exists (see Proposition \ref{exist-moments}).
	This proves Item (a).
	The second item follows by combining Item (a) with
	the polynomial identity $a^n-b^n = (a-b) \sum_{k=0}^{r} a^{r-k}b^k$ and the binomial expansion.
	Already, the proof of Item (c) is obtained by using Item (a) and simple algebraic manipulations.
\end{proof}
%

%
%
%
%

\subsection{Maximum likelihood estimation}
In this section, we discuss the  maximum likelihood estimation   for the unknown model parameters based on a random sample $S_1,S_2,\ldots,S_n$  from   $S\sim {\rm BS_d}(\boldsymbol{\theta})$,   with $\bm\theta=(\alpha, \beta)$.
Thus,  the  log-likelihood function for $\bm\theta$ is given by  
\begin{equation}\label{loglik}
l(\bm\theta)
=
\sum_{i=1}^{n}
\log\left\{ \Phi\big[a({s_{i}+1};\boldsymbol{\theta})\big]
-
\Phi\big[a({s_{i}};\boldsymbol{\theta})\big] \right\}.
\end{equation}
In order  to  obtain the   maximum likelihood  estimate of $\bm\theta$, we  have  the    score function given  by
$\dot{\bm l}(\bm\theta)=[\dot{l}_{\alpha}(\bm\theta),\dot{l}_{\beta}(\bm\theta)]^{\top}$, whose elements are given by
\begin{align} \label{first-der}
\dot{l}_{z}(\bm\theta)
&=
\sum _{i=1}^n 
\sum_{j=0}^{1}
(-1)^{j+1} \,
{\partial a({s_{i}+j};\bm\theta)\over \partial z}\,
\phi_{j}(s_{i}, \boldsymbol{\theta}),
\quad 
z\in\{\alpha,\beta\},
\end{align}
where
\begin{eqnarray*}
\frac{\partial
a(s_i;\bm\theta)}{\partial\alpha}&=&-\frac{1}{\alpha}a(s_i;\bm\theta), \, \frac{\partial
a(s;\bm\theta)}{\partial\beta}= -\frac{1}{2\alpha}\frac{1}{s^{1/2}\beta^{3/2}}(s+\beta)\, \, {\rm  and} \, \phi_{j}(s_{i}, \boldsymbol{\theta})= {\phi\big[a({s_{i}+j};\boldsymbol{\theta})\big]\over \mathbb{P}(S=s_i)}.
\end{eqnarray*}
The maximum likelihood estimate of  $\alpha$  and  $\beta$ can be
obtained solving the equations $\dot{l}_{\alpha}(\bm\theta)=0$ and $\dot{l}_{\beta}(\bm\theta)=0$  by an iterative procedure for non-linear optimization.  The Hessian matrix of ${l}(\bm\theta)$ is given by
$ \ddot{\bm l}(\bm\theta)
=
\big[
\frac{\partial^2 l(\bm\theta)}{\partial z\partial w}
\big]_{2\times 2},
$
for each $w,z\in\{\alpha,\beta\}$, where 
\begin{eqnarray}\label{sec-der-a}
\ddot{l}_{wz}(\bm\theta)&=&\sum _{i=1}^n 
\sum_{j=0}^{1}
(-1)^{j+1}
{\partial^2 a({s_{i}+j};\bm\theta)\over\partial w \partial z} \,
{\phi_{j}(s_{i},\boldsymbol{\theta})}  \nonumber \\
&-&
\sum _{i=1}^n 
\sum_{j=0}^{1}
(-1)^{j+1}
a({s_{i}+j};\boldsymbol{\theta})\,
{\partial a({s_{i}+j};\bm\theta)\over\partial w } \,
{\partial a({s_{i}+j};\bm\theta)\over\partial z }\,
{\phi_{j}(s_{i},\boldsymbol{\theta})} \nonumber \\
&-&
\sum _{i=1}^n 
\sum_{j=0}^{1}
(-1)^{j+1}
{\partial a({s_{i}+j};\bm\theta)\over\partial z }\,
{\phi_{j}\big[s_{i},\boldsymbol{\theta}\big]}
\sum_{k=0}^{1} (-1)^{k+1}\,
{\partial a({s_{i}+k};\bm\theta)\over\partial w } \,
{\phi_{k}(s_{i},\boldsymbol{\theta})},
\end{eqnarray}
where $\phi_{j}(s_{i}, \boldsymbol{\theta})$ is  as  in (\ref{first-der}) and  the second-order partial derivatives of $a(\cdot;\bm\theta)$, with respect to the parameters, are given by
\begin{eqnarray*}
\frac{\partial^2
a(s_i;\bm\theta)}{\partial\alpha\partial\alpha}&=&\frac{2}{\alpha^2}a(s_i), \,\, \,\frac{\partial^2
a(s_i;\bm\theta)}{\partial\beta\partial\alpha}=-\frac{1}{\alpha}\frac{\partial
a(s_i;\bm\theta)}{\partial\beta}, \, \,\, \frac{\partial^2
a(s_i;\bm\theta)}{\partial\beta\partial\beta}=\frac{1}{4\alpha s_i^{1/2}\beta^{5/2}}(3s_i+\beta).
\end{eqnarray*}

\section{Discrete Birnbaum-Saunders regression model}\label{sec:03}

In the context of count data, the  ${\rm BS_d}$ distribution  may   be   an  interesting  alternative distribution to   usual   discrete  distributions  or   to    those discrete    distributions     have    been  derived from     continuous   distributions. Then,  for  the  ${\rm BS_d}$   distribution   we     are     also  going   to    consider   its    associated  regression     model,  which    will   be   the  goal of this  part    of  the  study.  The   associated   ${\rm BS_d}$  regression  model that   we  are going   to introduce    is   inspired  by      continuous   BS   regression  model developed  by  \cite{balazhu:15},  where they considered the  scale  parameter    depending  on covariates.

Suppose that  we observe   independent failure times  $S_1,\ldots, S_n$,  such   as 
\begin{eqnarray}\label{regeq01}
 S_{i}\sim {\rm BS_d}(\boldsymbol{\theta}_{i}), 
 \end{eqnarray}
where $\boldsymbol{\theta}_{i}=(\alpha,\beta_{i})$, $i=1,\ldots,n$.  The   distribution    depends   on  covariates  $\bm{x}_i= (x_{1i},\ldots, x_{pi})^{\top}$  associated   with  $\beta_i$   thought   $\beta_i=\exp(\bm{x}_i^\top \bm{\eta})$, with $\bm{\eta}=(\eta_0,\eta_1,
\ldots, \eta_p)^\top$ being a vector of unknown parameters. The corresponding PMF associated with (\ref{regeq01}) is
\begin{eqnarray*}
\mathbb{P}(S_{i}=s_{i}) 
&= &  \Phi\big[a(s_i;\boldsymbol{\theta}_i)\big]\mathds{1}_{\{s_i=0\}} +
\left\{
\Phi\left(a(s_i+1;\boldsymbol{\theta}_i)\right)
- \Phi\left( a(s_i;\boldsymbol{\theta}_i)\right)\right\}\mathds{1}_{\{s_i\geqslant 1\}},
\end{eqnarray*}
$i=1,\ldots,n$.

\subsection{Maximum likelihood estimation}
The  log-likelihood function for ${\bm\theta}=(\alpha, \bm{\eta}^{\top} )^{\top}$ is given by
\begin{eqnarray}\label{loglikreg}
l(\bm\theta)&=&
\sum_{i=1}^{n}
\log\left\{ \Phi\big[a(s_i;\boldsymbol{\theta}_i)\big]\mathds{1}_{\{s_i=0\}} +
\left\{
\Phi\left(a(s_i+1;\boldsymbol{\theta}_i)\right)
- \Phi\left( a(s_i;\boldsymbol{\theta}_i)\right)\right\}\mathds{1}_{\{s_i\geqslant 1\}} \right\}.
\end{eqnarray}

Then, the first derivatives of the log-likelihood function \eqref{loglikreg},
with ${\bm\theta}=(\alpha, \bm{\eta}^{\top} )^{\top}$, can be written as
\begin{eqnarray*}
\dot{l}_{u}(\bm\theta)
&=& \sum _{i=1}^n  \sum_{j=0}^{1}
(-1)^{j+1} \,
{\partial a({s_{i}+j};\bm\theta)\over \partial z_i}\,
{\partial z_i\over \partial u}\,\phi_{j}(s_{i},\boldsymbol{\theta})
\quad 
u\in\{\alpha, \bm{\eta}\}, z_i\in\{\alpha,\beta_i\},
\end{eqnarray*}
where
$\dot{l}_{u}(\bm\theta)={\partial l(\bm\theta)}/{\partial u}$, and
${\partial a({s}_i;\bm\theta)/ \partial z_i}$ is as in \eqref{first-der}.  Specifically 
\begin{eqnarray*}
\dot{l}_{\alpha}(\bm\theta)&=&\sum _{i=1}^n \sum_{j=0}^{1}
(-1)^{j+1} \, {\partial a({s_{i}+j};\bm\theta)\over \partial \alpha}\,
\phi_{j}(s_{i},\boldsymbol{\theta}), \\
\dot{l}_{\eta}(\bm\theta)&=&\sum _{i=1}^n \sum_{j=0}^{1}
(-1)^{j+1} \, {\partial a({s_{i}+j};\bm\theta)\over \partial \beta_i}\,
\phi_{j}(s_{i},\boldsymbol{\theta}) {\partial \beta_i\over \partial \bm{\eta}},
\end{eqnarray*}
where ${\partial\beta_i/\partial \bm{\eta}}=\beta_i \bm{x}_i; i=1,\ldots,n.$  From  the likelihood equations
$\dot{l}_{\alpha}(\bm\theta)=0$  and  $\dot{l}_{\eta}(\bm\theta)=0$,  we  can  see  that  there  is no closed-form solution to the maximization problem, so  we   implement  two algorithms in software \texttt{R} to find the maximum likelihood estimates of $\alpha$, $\beta$ and $\eta_i$, $i=0,\ldots, p$, by using the function \texttt{optim()}; see \cite{rCoreTeams:20}. These procedures are evaluated and used in the next section.

Furthermore, the Hessian matrix of $l(\bm\theta)$  is given by
\[
\big[\ddot{l}_{vu}(\bm\theta)\big]_{p\times p}
= \begin{bmatrix}
\frac{\partial^2 l(\bm\theta)}{\partial \alpha^2}
&  \frac{\partial^2 l(\bm\theta)}{\partial \alpha \partial  \bm{\eta}^{\top}}
\\[0,2cm]
\frac{\partial^2 l(\bm\theta)}{\partial \bm{\eta}\partial \alpha}
&  \frac{\partial^2 l(\bm\theta)}{\partial \bm{\eta}\partial \bm{\eta}^{\top}}
\end{bmatrix},
\]
where, for each $v,u\in\{\alpha, \bm{\eta} \}$ and $w_i,z_i\in\{\alpha,\beta_i\}$, the  elements  of  the  Hessian   matrix are   given  by 
\begin{eqnarray*}
\ddot{l}_{vu}(\bm\theta)
&=& \sum _{i=1}^n 
\sum_{j=0}^{1}
(-1)^{j+1}
\left[
{\partial^2 a({s_{i}+j};\bm\theta)\over\partial w_i \partial z_i} \,
{\partial z_i\over \partial u} \,
{\partial w_i\over \partial v}
+
{\partial a({s_{i}+j};\bm\theta)\over\partial z_i }\,
{\partial^2 z_i\over\partial v \partial u}
\right]
\phi_{j}(s_{i},\boldsymbol{\theta}) \\
&-&\sum _{i=1}^n 
\sum_{j=0}^{1}
(-1)^{j+1}
a({s_{i}+j};\boldsymbol{\theta})\,
{\partial a({s_{i}+j};\bm\theta)\over\partial w_i } \,
{\partial a({s_{i}+j};\bm\theta)\over\partial z_i }\,
{\partial z_i\over \partial u} \,
{\partial w_i\over \partial v}\,
\phi_{j}(s_{i},\boldsymbol{\theta})
\\
&-&\sum _{i=1}^n 
\sum_{j=0}^{1}
(-1)^{j+1}
{\partial a({s_{i}+j};\bm\theta)\over\partial z_i }\,
\phi_{j}(s_{i},\boldsymbol{\theta})
\sum_{k=0}^{1} (-1)^{k+1}\,
{\partial a({s_{i}+k};\bm\theta)\over\partial w_i } \,
{\partial z_i\over \partial u} \,
{\partial w_i\over \partial v}\,
\phi_{k}(s_{i},\boldsymbol{\theta}),
\end{eqnarray*}
where ${\partial^2 a({s}_i;\bm\theta)/\partial w_i\partial z_i}$ is as in \eqref{sec-der-a}
and $$ \frac{\partial
\beta_i}{\partial \bm{\eta}}= \beta_i\bm{x}_i \, \,\, {\rm and} \, \, \,  \frac{\partial^2
\beta_i}{\partial \bm{\eta}\partial \bm{\eta}^{\top}}= \beta_i\bm{x}_i \bm{x}_i^{\top} $$

{\color{black}Again, note that the equation $\dot{l}_{u}(\bm\theta) = \bm 0$ does not provides analytic solutions for $\widehat{\alpha}$ and $\widehat{\eta_j}$, $j=0,\ldots, p$. Therefore, we have implemented two algorithms in software \texttt{R} to find the maximum likelihood estimates of $\alpha$ and $\eta_i$, $i=0,\ldots, p$, by using the function \texttt{optim()}; see \cite{rCoreTeams:20}. These procedures are evaluated and used in the next section.}

\section{Numerical evaluation}\label{sec:04}

In this section we carry out a simulation study to evaluate the performance of both the maximum likelihood estimators and residuals. Moreover, we analyse two real data sets. All numerical evaluations were done in the \texttt{R} software; see \cite{rsoftware:19}. The \texttt{R} codes are available upon request from the authors.

\subsection{Simulation}
 We first evaluate the performance of the maximum likelihood estimators for the $S\sim {\rm BS_d}$ model. Then, we consider a ${\rm BS_d}$ regression model where the parameter $\beta$ is associated with a covariate, that is,
\begin{eqnarray}\label{eqsim1}
\beta_{i}=\exp(\eta_{0}+\eta_{1}x_i)
\quad i=1,\ldots,n.
\end{eqnarray}
In \eqref{eqsim1}, the covariate values were randomly generated from the uniform distribution in the interval (0,1). The simulation scenario considers: sample size $n \in \{10, 50,150, 400\}$ and the values of the shape parameter as $\alpha \in \{0.50,1.50,1.50,3.00\}$ , with $1,000$ Monte Carlo replications for each sample size. The values of $\alpha$ have been chosen to cover the performance under low, moderate and high skewness. The $\text{BS}_{\text{d}}$  samples were generated using the Proposition \ref{quantile}.

The maximum likelihood estimation results for the ${\rm BS_d}$ model are presented in Table~\ref{tab:mc1}. We report the following sample statistics for the maximum likelihood estimates: empirical bias and mean squared error (MSE). Note that the results in Table~\ref{tab:mc1} allows us to conclude that, as the sample size increases, the bias and MSE of the estimators $\widehat{\alpha}$ and $\widehat{\beta}$ decrease, indicating that they are asymptotically unbiased, as expected.

\begin{table}[!ht]
 \tiny
\centering
\caption{Simulated values of biases (MSEs within parentheses) of the estimators of the ${\rm BS_d}$ model ($\beta=2$).}
\label{tab:mc1}
\renewcommand{\arraystretch}{1.8}
\resizebox{\linewidth}{!}{
\begin{tabular}{clrrrrrrrrrrrrrrrrrr}
\hline
        &&\multicolumn{2}{c}{$n=10$} && \multicolumn{2}{c}{$n=50$}    \\ \\[-0.6cm]
    && \multicolumn{1}{c}{$\widehat{\alpha}$}   &\multicolumn{1}{c}{$\widehat{\beta}$} &&\multicolumn{1}{c}{$\widehat{\alpha}$}&   \multicolumn{1}{c}{$\widehat{\beta}$}\\[-0.1cm] \hline 
0.5 &&   $-$0.0406(0.0213) &   0.0281(0.1052) &&   $-$0.0075(0.0035) &   0.0038(0.0199)\\[-0.2cm]
1.5 &&   $-$0.1462(0.1710) &   0.1138(0.7972) &&      0.0047(0.0488) &   0.0207(0.1785)\\[-0.2cm]
2.5 &&   $-$0.4642(0.6149) &   0.8073(2.6281) &&   $-$0.1262(0.1424) &   0.2500(0.3770)\\[-0.2cm]
3.0 &&   $-$0.6597(0.9420) &   1.2008(4.5642) &&   $-$0.2351(0.2279) &   0.4059(0.5581)\\[-0.2cm]
    &&\multicolumn{2}{c}{$n=150$} && \multicolumn{2}{c}{$n=400$}    \\  \\[-0.5cm]
0.5 &&   $-$0.0013(0.0011) &     0.0017(0.0069) &&   $-$0.0002(0.0004) &  $-$0.0013(0.0026)\\[-0.2cm]
1.5 &&   $-$0.0021(0.0157) &  $-$0.0018(0.0566) &&   $-$0.0007(0.0062) &  $-$0.0028(0.0215)\\[-0.2cm]
2.5 &&   $-$0.0372(0.0531) &     0.1024(0.1312) &&   $-$0.0135(0.0255) &     0.0319(0.0561)\\[-0.2cm]
3.0 &&   $-$0.0998(0.0918) &     0.1807(0.1908) &&   $-$0.0180(0.0433) &     0.0410(0.0765)\\
         \hline
\end{tabular}
}
\end{table}

Table~\ref{tab:mc2} reports the simulation results for the ${\rm BS_d}$ regression model. A look at the results in Table \ref{tab:mc2} allows us to conclude that, as the sample size increases, the empirical bias and MSE decrease, as expected. Moreover, we note that, as the value of the parameter $\alpha$ increases, the performances of the estimators of $\beta_{0}$,
$\beta_{1}$ and $\alpha$, deteriorate. 

\begin{table}[!ht]
\centering
\caption{Simulated values of biases (MSEs within parentheses) of the estimators of the ${\rm BS_d}$ regression model ({\color{black}$\eta_{0}=0.2$ and $\eta_{0}=1.5$}).}
\label{tab:mc2}
\renewcommand{\arraystretch}{1.5}
\resizebox{\linewidth}{!}{
\begin{tabular}{clrrrrrrrrrrrrrrrrrr}
\hline
    &&\multicolumn{3}{c}{$n=10$} && \multicolumn{3}{c}{$n=50$}    \\ \\[-0.5cm]
    && \multicolumn{1}{c}{$\widehat{\eta}_{0}$}   &\multicolumn{1}{c}{$\widehat{\eta}_{1}$} &\multicolumn{1}{c}{$\widehat{\alpha}$}
    &&\multicolumn{1}{c}{$\widehat{\eta}_{0}$}   &\multicolumn{1}{c}{$\widehat{\eta}_{1}$} &\multicolumn{1}{c}{$\widehat{\alpha}$}\\[-0.1cm] \hline 
0.5 &&      0.0122(0.1581)&$-$0.0065(0.4129)&$-$0.0586(0.0155) &&   0.0046(0.0235)&$-$0.0043(0.0665)&$-$0.0104(0.0027)\\[-0.2cm]
1.5 &&   $-$0.0457(1.1930)&0.0621(3.0408)&$-$0.1665(0.1672)    &&$-$0.0072(0.1611)&0.0108(0.4365)&$-$0.0268(0.0279)\\[-0.2cm]
2.5 &&   $-$0.2260(3.6180)&0.1322(8.2195)&$-$0.0438(3.0714)    &&$-$0.0473(0.3325)&0.0326(0.8433)&$-$0.0119(0.1108)\\[-0.2cm]
3.0 &&   $-$0.3103(5.2066)&0.1576(11.5868)&0.2436(23.5472)     &&$-$0.0632(0.4081)&0.0242(0.9987)&0.0156(0.2055)\\[-0.1cm]
    &&\multicolumn{3}{c}{$n=150$} && \multicolumn{3}{c}{$n=400$}    \\ \\[-0.5cm]
0.5 &&   $-$0.0003(0.0082)&0.0023(0.0212)&$-$0.0027(0.0009)    &&$-$0.0007(0.0030)&0.0011(0.0078)&$-$0.0009(0.0004)\\[-0.2cm]
1.5 &&   $-$0.0087(0.0534)&0.0160(0.1354)&$-$0.0072(0.0092)    &&$-$0.0043(0.0190)&0.0039(0.0499)&$-$0.0015(0.0036)\\[-0.2cm]
2.5 &&   $-$0.0195(0.1057)&0.0208(0.2493)&$-$0.0034(0.0359)    &&$-$0.0128(0.0356)&0.0107(0.0879)&   0.0024(0.0133)\\[-0.2cm]
3.0 &&   $-$0.0270(0.1289)&0.0209(0.2983)&0.0063(0.0590)       &&$-$0.0166(0.0433)&0.0143(0.1003)&0.0050(0.0230)\\

         \hline
\end{tabular}
}
\end{table}

\subsection{Examples}

The ${\rm BS_d}$ distribution and its regression model proposed in Section \ref{sec:03} are now used to analyze two data sets. In the first case, the objective is to fit the BS$_{\rm d}$ distribution to data corresponding to biaxial fatigue-life of $n = 46$ metal specimens (in cycles) until failure; this data set can be found in \cite{rieck:1989}. In the second example, we fit the proposed regression model to data on the fatigue-life (in cycles $\times 10^{-3}$) of concrete specimens (response variable $Y$), where the covariate is the ratio of applied stress causing failure (covariate $x$); see \cite{mills:1997}. In this second data set, the number of observations is $n=45$.

\paragraph{Case study 1: Metal specimens}  A descriptive summary of this data provides 
the following sample values: 566(median); 943.065(mean); 1110.934(standard deviation); 117.8(coefficient of variation); 2.204(coefficient of skewness); 4.682(coefficient of kurtosis), whereas their minimum and maximum times are 125 and 5046, respectively. {The histogram shown in Figure \ref{fig:histogram} and the value of the coefficient of skewness support the assumption that these data follow an asymmetrical distribution. We have assumed different discrete asymmetrical distributions to describe this data set, including the Weibull, gamma, log-normal, log-Student-$t$, and log-power-exponential (log-PE) distributions; see \cite{nakagawaozaki:1975}, \cite{abouammohalhazzani:2015}, and \cite{svpbm:21}. Table \ref{criteria} presents the Akaike (AIC) and Bayesian (BIC) information criteria. The results of Table \ref{criteria} reveal that the BS$_{\rm d}$ model provides better adjustment than the other models based on the values of AIC and BIC. The estimates and standard errors (in parenthesis) for the BS$_{\rm d}$ model are $\widehat{\alpha}=1.0840 (0.1130)$ and $\widehat{\beta}=595.1987 (81.6782)$, and the fitted PMF is also shown in Figure \ref{fig:histogram}.}

\begin{center}
\begin{table}[h]	
    \caption{{Values of AIC and BIC for different discrete asymmetrical distributions.}}
    \resizebox{16cm}{!}{     
	\begin{tabular}{ccccccccc}
	\hline
{criterion} & {Weibull} & {Gamma} & {log-normal} & {log-Student-$t$} & {log-PE}  & {BS$_{\rm d}$} \\
\hline AIC  &  726.1691 &  726.1692 & 718.8391   & 719.7321       & 715.8479        & 714.8548              \\
BIC        &  729.8264  & 729.8265 & 724.3250    & 725.2181       & 721.3338          & 718.5121          \\
	\hline
	\end{tabular}}
   \label{criteria}
\end{table}
\end{center}

\vspace{-0.5cm}

\begin{figure}[h]
	\centering
	\includegraphics[width=0.4\linewidth]{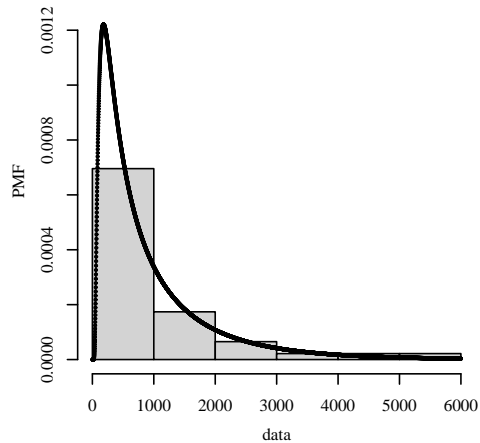}
	\caption{histogram for data of metal specimens.}
	\label{fig:histogram}
\end{figure}

\paragraph{Case study 2: Concrete specimens}  The number of cycles until failure is expected to increase inversely with the ratio of applied stress causing failure. The postulated model is given by
\begin{equation*}
	\beta_i = \exp(\eta_0 + \eta_1 x), \quad Y_i \sim \textrm{BS}_d(\alpha, \beta_i),
\end{equation*}	
for $i=1,\ldots,45$. The maximum likelihood estimates and standard errors (in parenthesis) for $\alpha$, $\eta_0$ and $\eta_1$ are $\widehat{\alpha}=0.4966(0.0641)$, $\widehat{\eta}_0=27.4913(3.2530)$ and $\widehat{\eta}_1 = -23.9647(3.5146)$, respectively. Figure \ref{residuals} presents the QQ plots with envelope of the generalized Cox-Snell and randomized quantile residuals for the ${\rm BS_d}$ regression model; see \cite{sauloetal:2019}. Note that all points are inside the bands and around the $y=x$ line, demonstrating a very good fit of the proposed model.

\begin{figure}[h!]
	\centering
	\subfloat[]{
		\includegraphics[width=0.46\linewidth,height=0.35\linewidth]{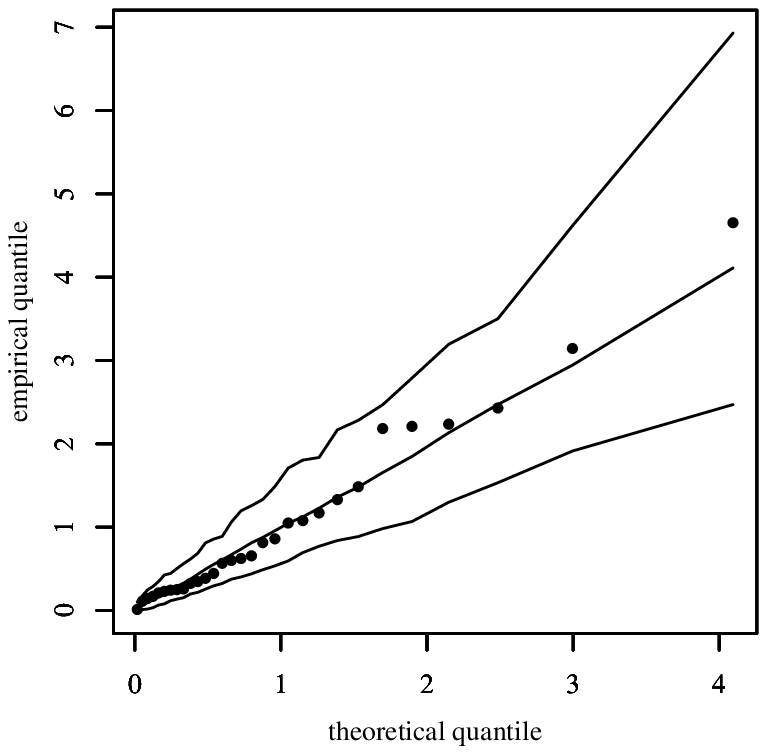}} \hspace{0.3cm}
	\subfloat[]{
		\includegraphics[width=0.46\linewidth,height=0.35\linewidth]{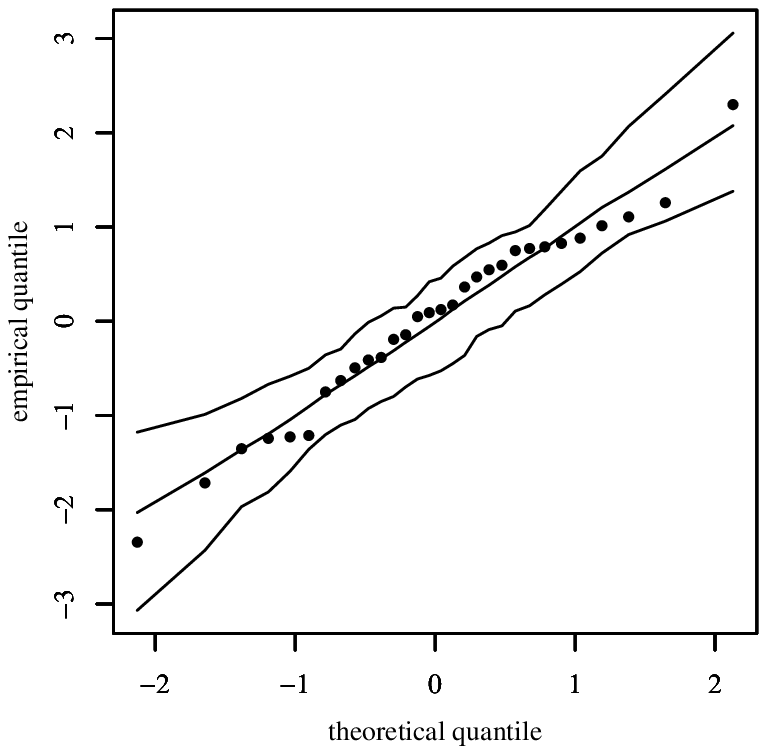}} 
	\caption{QQ plots with envelope for the generalized Cox-Snell (a) and randomized quantile (b) residuals for the concrete specimens data.}
	\label{residuals}
\end{figure}

\section{Concluding remarks}\label{sec:05}

The continuous Birnbaum-Saunders distribution has been widely used in several areas, besides being an alternative to the Weibull and gamma distributions. However, in many practical problems, the use of discrete distributions is more appropriate. In this sense, we have studied a discrete version of the Birnbaum-Saunders distribution. Some important properties have been presented, such as moments, quantile function and reliability. We have presented a formal proof concerning the unimodality property of discrete Birnbaum-Saunders distribution. In addition, we have proposed a new discrete Birnbaum-Saunders regression model. Monte Carlo simulations have been carried out to evaluate the behaviour of
the maximum likelihood estimators. Two examples with real data have illustrated the proposed methodology. The results are seen to be quite favorable to the discrete Birnbaum-Saunders distribution as well as its regression model in terms of model fitting.

\end{document}